\let\chapter\section
\renewcommand{\gamma}{\text{\textbabygamma}}
\newtheorem{theorem}{Theorem}
\newtheorem{lemma}{Lemma}
\theoremstyle{plain}
\theoremstyle{definition}
\newtheorem{definition}{Definition}
\let\abs=\envert
\newcommand{\norm}[1]{\lVert#1\rVert}
\newcommand{\ind}[1]{\mathbbm{1}{\{#1\}}}
\newcommand{\bone}[1]{\mathbf{1}_{#1}}
\newcommand{\given}{\,|\,}
\DeclareMathOperator{\var}{Var}
\DeclareMathOperator{\argmin}{arg\,min}
\renewcommand{\gamma}{\text{\textbabygamma}}
\title{Likelihood Ratio Tests for a Dose-Response Effect using Multiple Nonlinear Regression Models}
\author
{Georg Gutjahr \\
Department of Mathematics, University of Bremen, Germany\\
\texttt{georg.gutjahr@math.uni-bremen.de}
\and
Bj\"{o}rn Bornkamp \\
Biostatistical Sciences and Pharmacometrics, Novartis Pharma
AG, Basel, Switzerland\\
\texttt{bjoern.bornkamp@novartis.com}}
\date{}
\begin{document}

\maketitle

\begin{abstract}

  We consider the problem of testing for a dose-related effect based
  on a candidate set of (typically nonlinear) dose-response models
  using likelihood-ratio tests. For the considered models this reduces
  to assessing whether the slope parameter in these nonlinear
  regression models is zero or not. A technical problem is that the
  null distribution (when the slope is zero) depends on
  non-identifiable parameters, so that standard asymptotic results on
  the distribution of the likelihood-ratio test no longer
  apply. Asymptotic solutions for this problem have been extensively
  discussed in the literature. The resulting approximations however
  are not of simple form and require simulation to calculate the
  asymptotic distribution. In addition their appropriateness might be
  doubtful for the case of a small sample size.  Direct simulation to
  approximate the null distribution is numerically unstable due to the
  non identifiability of some parameters. In this article we derive a
  numerical algorithm to approximate the exact distribution of the
  likelihood-ratio test under multiple models for normally distributed
  data. The algorithm uses methods from differential geometry and can
  be used to evaluate the distribution under the null hypothesis, but
  also allows for power and sample size calculations.  We compare the
  proposed testing approach to the MCP-Mod methodology and alternative
  methods for testing for a dose-related trend in a dose-finding
  example data set and simulations.


\end{abstract}

\section{Introduction}
\label{sec:intro}

A major objective in the development of a pharmaceutical compound is
the characterisation of its dose-response curve. For this purpose
Phase II trials are conducted that compare several doses of the
compound to placebo. Then (typically) nonlinear regression models are
used to estimate the underlying dose-response curve. See for example
\cite{bret:pinh:bran:2005, thom:2006, drag:hsua:padm:2007,
  jone:layt:rich:2011} or \cite{grie:kram:2005} for different
approaches towards model-based dose-response analyses in Phase II
studies.

Typically the model linking dose and response can be assumed to follow
a nonlinear function of the form $\alpha+\beta x_{\gamma}$, where
$\alpha$ and $\beta$ are linear parameters describing the placebo
response and the slope parameter, and $x_{\gamma}$ is a nonlinear
transformation of the dose variable $z$ depending on a parameter
vector $\gamma$. The major question regarding the dose-response curve
that we will consider in this article is to assess whether there
exists a dose-related effect (\textit{i.e.} a dose-response trend) or
not. For the dose-response function above this reduces to testing the
hypothesis of $\beta=0$.

One of the challenges with a dose-response model based approach is
that there is model uncertainty at the design stage of the trial, when
the statistical analyses are specified. That means specifying one
particular form of the dose-response curve bears the risk of
mis-specification. Naively one might think that it is valid to apply a
model selection procedure to obtain the best model once one has
obtained the data and then perform a test for a dose-related effect,
ignoring the fact that a model selection was performed. However it is
known that statistical inference following a model selection is no
longer valid in the sense of not providing confidence intervals of
nominal coverage and resulting in a type I error inflation
(\cite{chat:1995, leeb:poet:2005}, or Chapter 7 of
\cite{clae:hjor:2008}). These concerns might have lead to a situation
where primarily ANOVA type methods have been used in Phase II trials,
which do not assume a functional relationship between dose and
response. This, however, comes at a loss in terms of statistical
efficiency but also interpretability of the study results.

A compromise between approaches that make no assumptions on the
functional form of the dose-response curve and those that assume one
specific dose-response model is to specify a candidate set of
dose-response models. This is the idea underlying the Multiple
Comparison Procedures and Modeling (MCP-Mod) approach
(\cite{bret:pinh:bran:2005, pinh:born:glim:2014,ema:2014}).  MCP-Mod
consists of two steps: testing and estimation.  The testing step,
which is of primary interest for this article, is done using multiple
linear contrast tests. To derive contrast tests that are powerful to
detect the nonlinear shapes in the candidate set, one needs to
pre-specify the parameters $\gamma$ of the nonlinear part of the
regression functions.  This approach bears the risk of model
mis-specification (as a $\gamma$ different from the one pre-specified
might be adequate).

An alternative approach is to assess the hypothesis $\beta=0$ by using
multiple likelihood-ratio tests. Arguments by \citet{andrews:1995} and
\citet{andrews:1996} show that the likelihood-ratio test will always
be admissible in this setting (\textit{i.e.}, there is no test that
controls the type-I error and is uniformly more powerful).  A
technical problem with such likelihood-ratio tests is that under the
null hypothesis the parameters $\gamma$ of the nonlinear regression
models are asymptotically not identifiable.  Therefore, the standard
asymptotic results for the null distribution of likelihood-ratio
statistics do no longer apply; see for example \citet{davi:1977},
\citet{davi:1987}, \citet{andr:plob:1994}, \citet{ritz:skov:2005} and
\citet{liu:shao:2003}.  \cite{dett:tito:bret:2015} derive the
asymptotic null distribution of the likelihood-ratio statistic under
multiple models.  \cite{baay:houg:2015} only consider nested
dose-response models, but describe a similar approach, where the
asymptotic distribution for the likelihood-ratio tests is derived
explicitly.  The resulting asymptotic distribution is not of a simple
form and requires simulation to approximate critical values and
p-values that are asymptotically valid.

We will consider the exact distribution for likelihood-ratio tests for
the null hypothesis that no trend exists against the composite
alternative that one of the candidate models is true.  We assume that
observations are independent and normally distributed and that the
dose-response models are piece-wise continuous, which is generally true
for the models considered in practice.

To find the small-sample distribution of the likelihood-ratio
statistic, one can of course simulate data and fit the dose-response
models under the null hypothesis ($\beta=0$) to obtain critical values
for the test statistic.  However, as most of the statistical models
considered in this article are nonlinear, computationally expensive
iterative techniques are required to calculate the maximum likelihood
estimates for each model and each simulated data set.  Such an
approach is also numerically unstable under $\beta=0$ as the
likelihood function will be almost flat with potentially several local
maxima for the non-identifiable parameter $\gamma$.

In this article, we work with the small-sample distribution of the
likelihood-ratio statistic using methods developed by \cite{hote:1939}
and \cite{weyl:1939} and reviewed from a statistical perspective by
\cite{joha:john:1990}.  We start with the special case of a single
model.  For this special case, Hotelling used a geometric approach to
derive the exact null distributions analytically.  We extend these
methods to multiple dose-response models and show that, as in the case
of a single model, the distribution is determined by volumes of
certain tubular neighborhoods on the unit sphere.  We then present an
importance sampling-type algorithm to approximate such volumes
numerically.  In particular, this approach does not require the
calculation of maximum-likelihood estimates (with its associated
numerical difficulties) in each simulation run. Using this approach,
it is also possible to evaluate the power of the proposed test under
alternative hypotheses, thereby enabling sample-size calculations,
which are crucial in clinical development.

The outline of this paper is as follows.  After introducing the
notation in Section~\ref{sec:not}, we will review the application of
Hotelling's approach for the case of one dose-response model in
Section~\ref{sec:onemod}.  We then present the new methodology in
Section~\ref{sec:multmod} and Section~\ref{sec:num}, where a numerical
algorithm is introduced to approximate the distribution of the
likelihood-ratio test.  In Section~\ref{sec:appl} we apply the method
to data from a dose-response study and compare the performance of the
new method to MCP-Mod and a number of alternative approaches in a
setting motivated by this example data.  Section~\ref{sec:concl} gives
some concluding remarks.

\section{Methods}
\label{sec:methods}

\subsection{Notation}
\label{sec:not}

Consider a random vector $y$ containing the clinical measurement of
interest for each of the $n$ patients. Here we assume $y \sim
\mathcal{N}(\mu, \sigma^2 I_n)$ with $n$ independent, normally
distributed observations with an unknown common standard deviation
$\sigma \in \mathbb{R}_+$ and an unknown mean vector $\mu \in
\mathbb{R}^n$.

To describe possible forms of the mean vector $\mu$, one selects $m$
candidate dose-response models of the following partially-linear form
\begin{equation}
  \label{eq:mu}
  \mu_i(\alpha, \beta, \gamma) = \alpha \bone{n}
  +\beta\,x_{\gamma, i}
\quad (\alpha, \beta\in\mathbb{R},\;\gamma\in\Gamma_i,\;i = 1,
\dots, m)
\end{equation}
where $x_{\gamma,i} = x_{\gamma,i}(z)$ are nonlinear transformations
of the dose variable $z$.  Since the value of $z$ can be fixed
throughout, the dependencies on $z$ will not be explicitly indicated.
We assume that the nonlinear transformations $\gamma\mapsto
x_{\gamma,i} : \Gamma_i \to \mathbb{R}^n$ are piece-wise-continuous
functions. 
One example is the so-called Emax model
\[
\mu_i(\alpha,\beta,\gamma) = \alpha + \beta x_{\gamma},
\qquad x_{\gamma}= z/(z + \gamma),
\qquad \alpha,\beta \in \mathbb{R}, \; \gamma \in \Gamma = \mathbb{R}_+;
\]
for more examples, see, e.g., ~\cite{born:pinh:bret:2009}.

Note that for $\beta = 0$, the value of $\gamma$ in \eqref{eq:mu} has
no influence on the shape of the model function, which means that
asymptotically the parameter is not identified.

We will consider the hypothesis $\mu \in H_0$, where
\begin{equation}
  \label{eq:nullhypothesis}
  H_0 = \{\alpha \bone{n} : \alpha \in \mathbb{R}\},
\end{equation}
the one-sided alternatives $\mu \in H_1, \dots, \mu\in H_{2m}$,
where, for $i = 1, \dots, m$,
\begin{equation}
  \label{eq:alternativeplus}
  H_i = \{\mu_i(\alpha, \beta, \gamma) :
  \alpha \in \mathbb{R}, \beta \in \mathbb{R}_+, \gamma \in \Gamma_i\},
\end{equation}
\begin{equation}
  \label{eq:alternativeminus}
  H_{m + i} = \{\mu_i(\alpha, -\beta, \gamma) :
  \alpha \in \mathbb{R}, \beta \in \mathbb{R}_+, \gamma \in \Gamma_i\},
\end{equation}
and the multiple alternatives $\mu \in H_{I}$, where
\begin{equation}
  \label{eq:alternatives}
  H_{I} = \bigcup_{i\in I} H_i,
  \quad\text{for}~I \subseteq \{1, \dots, 2 m\}.
\end{equation}
For the special cases $I = \{1, \dots, m\}$ or $I = \{m + 1, \dots,
2m\}$ we get one-sided alternatives, and for the special case $I =
\{1, \dots, 2 m\}$ a two-sided alternative.  The goal will be to
derive the likelihood-ratio test for $H_0$ against $H_I$.

\subsection{Distribution of the LR test}
\label{sec:distribution-lr-test}

To illustrate the general approach for calculation of the distribution
of the likelihood-ratio test we will start with the simple setting of
a single linear dose-response model first in Section
\ref{sec:single-linear-dose}, then consider a single nonlinear model
in Section \ref{sec:onemod} and finally present the situation for
multiple dose-response models in Section \ref{sec:multmod}.

\subsubsection{Single linear dose-response model}
\label{sec:single-linear-dose}

We first consider the special case of testing $H_0$ against a single
hypothesis $H_i$ (so that $I = \{i\}$), where $\Gamma_i =
\{\gamma_i\}$ contains a single parameter value $\gamma_i$. In this
case $\gamma_i$ is known and the dose-response model reduces to a linear
regression model.  If we define $x = x_{\gamma_i, i}$, then $H_i$
becomes the hypothesis $\beta > 0$ in the linear model $y \sim
\mathcal{N}(\alpha\bone{n} + \beta x,\; \sigma^2 I_n)$, while $H_0$
states that $\beta = 0$. In this case standard distributional results
can be used to calculate the distribution of the LR test statistic for
testing $\beta=0$. The reason, why we present this case here is to
motivate a transformation of the standard LR test statistic that turns
out to be useful for subsequent sections.

Appendix A shows that the LR statistic has
the form $S(R) = (1 - \ind{R > 0} R^2)^{n/2}$ with
\begin{equation}
  \label{eq:stat}
  R = \tilde{x}^{\top} \tilde{y},\qquad\text{where}\quad
  \tilde{x} = \frac{B x}{\norm{Bx}} \quad\text{and}\quad
  \tilde{y} = \frac{B y} {\norm{By}},
\end{equation}
where the rows of the
$(n -1) \times n$ matrix $B$
form an orthonormal basis for
the linear subspace
$\mathbb{L} = \{a \in \mathbb{R}^{n} :
a^{\top} 1_{n} = 0\}$ and $\norm{\cdot}$ denotes the Euclidean norm.
The LR test rejects $H_0$
if $S$ is small enough, or equivalently,
since $S$ is non-increasing in $R$,
if $R$ is large enough.

The model predicts that the mean vector
has the form $\alpha \bone{n} + \beta x$, so that the centered and
scaled prediction from the model equals
$B(\alpha 1_n + \beta x) / \norm{B(\alpha 1_n + \beta x)} =
(B x) / \norm{B x} = \tilde{x}$ in the basis $B$.
Furthermore, $\tilde{y}$ contains
the standardized (centered and scaled)
observations, so that both $\tilde{x}$ and $\tilde{y}$ lie on the unit
sphere.  The centering and scaling implies that $R$ is the correlation coefficient between
observations and predictions and $R^2$ the coefficient of
determination \citep[][Section 4.4]{seber:2003}.
Since the inner product between the two unit
vectors $\tilde{x}$ and $\tilde{y}$ increases
monotonically with the distance
between the two vectors, the LR test rejects $H_0$
if the standardized predictions are close enough
to the standardized observations.

Let $d = n - 2$ denote the degrees of freedoms
and $\mathbb{S} = \{s \in \mathbb{R}^{d+1} : \norm{s} = 1\}$
the $d$-dimensional unit sphere
(some authors call it the $d+1$ dimensional unit sphere
since it is in $\mathbb{R}^{d+1}$).
Then $\tilde{x} \in \mathbb{S}$ and
$\tilde{y}$ is
uniformly distributed on $\mathbb{S}$
under $H_0$,
which can be seen as following:
the assumption of $H_0$ is
$y \sim \mathcal{N}(\alpha \bone{d + 2}, \sigma^2 I_{d + 2})$;
hence $By \sim \mathcal{N}(\alpha B \bone{d + 2}, \sigma^2 B B^{\top})$
with $\alpha B \bone{d + 2} = \mathbf{0}_{d + 1}$
(since $\bone{d + 2}$ is orthogonal to $\mathbb{L}$)
and $B B^{\top} = I_{d + 1}$;
hence $B y$ is spherically symmetric~\citep{fang:1990} and
$\tilde{y}$ is uniformly distributed on $\mathbb{S}$.


The p-value of the LR test is $P_0(R > r)$,
with $r$ for the observed value of $R$
and $P_0(\cdot)$ the probability
calculated under the null hypothesis $H_0$
(since the distribution of $R$ does not
depend on the parameter $\alpha$, we
may assume $\mu = \mathbf{0}_n$ for $P_0$).
If we define the neighborhood around a point $s \in \mathbb{S}$ (a
spherical cap) as
$\mathbb{C}_{s r} = \{t \in \mathbb{S} : s^{\top} t >r\}$,
then $P_0(R > r)$ is equal to the probability that $\tilde{y}$ is in the
spherical cap around $\tilde{x}$ defined by $r$ under $H_0$: $P_0(R > r)=P_0(\tilde{y}
\in \mathbb{C}_{\tilde{x} r})$; see Figure~\ref{fig:cap}.
As $\tilde{y}$ is uniformly distributed on $\mathbb{S}$
under $H_0$ it follows that $P_0(\tilde{y} \in \mathbb{C}_{\tilde{x} r}) = \abs{\mathbb{C}_{\tilde{x}r}} /
\abs{\mathbb{S}}$ where $\abs{\cdot}$ is the
($d$-dimensional) volume of a set on $\mathbb{S}$.
Since the volume of the spherical cap
$\abs{\mathbb{C}_{s r}}$ does not depend on
the point $s \in \mathbb{S}$,
we will simply write $\abs{\mathbb{C}_{r}}$
for such a volume.

Using the equations for the volumes of spheres
and of spherical caps \citep{li:2011},
we get
\begin{equation}
  \label{eq:cr}
  P_0(R > r) = \frac{\abs{\mathbb{C}_{r}}} {\abs{\mathbb{S}}} =
  \frac{1 - F(r^2, 1/2, d / 2)}{2},\qquad\text{for $r \in[0,1]$},
\end{equation}
where $F(\cdot, a, b)$ denotes the \textsc{cdf} of the
beta distribution with parameters $a, b \in\mathbb{R}_+$.
The equation when $r \in [-1,0)$ follows
from $\abs{\mathbb{C}_{r}} = \abs{\mathbb{S}} - \abs{\mathbb{C}_{-r}}$.

\begin{figure}
 \hfill
\begin{tikzpicture}[scale = 2,>=stealth]
  \def\xtAngle{75}
  \def\rAngle{36}
  \def\rLength{0.809017}

  \begin{scope}
    \draw[style=ultra thin, gray!60!white,step=0.5cm] (-1.3,-1.3) grid (1.3,1.3);
  \end{scope}

  \draw [gray!50!white,line width=7.5] (\xtAngle - \rAngle:1) arc (\xtAngle - \rAngle:\xtAngle + \rAngle:1);
  \draw [thin] (0,0) circle (1);
  \draw [thick] (0,0) -- (\xtAngle:1) [fill=black] circle(0.75pt);
  \node at (\xtAngle:1.2) {$\tilde{x}$};
  \node at (\xtAngle - \rAngle:1.2) {$\tilde{y}$};

  \draw [thick] (0,0) -- (\xtAngle - \rAngle:1) [fill=black] circle(0.75pt);
  \draw (\xtAngle:1) -- (\xtAngle - \rAngle:\rLength);

  \draw[snake=brace,raise snake=5pt,mirror snake]
  (0,0) -- (\xtAngle - \rAngle:\rLength);
  \node at (0.45, 0.08) {$r$};
  \node [inner sep = 1pt] (Cxr) at (-0.66, 1.15) {$\mathbb{C}_{\tilde{x}r}$};
  \draw [very thin,->] (Cxr.east) to [out=355,in=100] (100:1.045);
  \node [inner sep = 1pt] (S) at (-0.925,0.9) {$\mathbb{S}$};
  \draw [very thin,->] (S.east) to  [out=0,in=125] (125:1);

  \begin{scope}[shift=(\xtAngle - \rAngle:\rLength)]
    \draw (\xtAngle - \rAngle + 90:0.1)
    arc (\xtAngle - \rAngle + 90: \xtAngle - \rAngle + 180:0.1);
    \node at (\xtAngle - \rAngle + 135:0.05)
    [circle,draw=black,fill=black,inner sep=0.2] {};
  \end{scope}
\end{tikzpicture}.
\hfill
 \begin{tikzpicture}[scale = 2,,>=stealth]
  \draw[style=help lines,step=0.5cm,white] (-1.3,-1.3) grid (1.3,1.3);
  \pgftext {\includegraphics[scale=0.29]{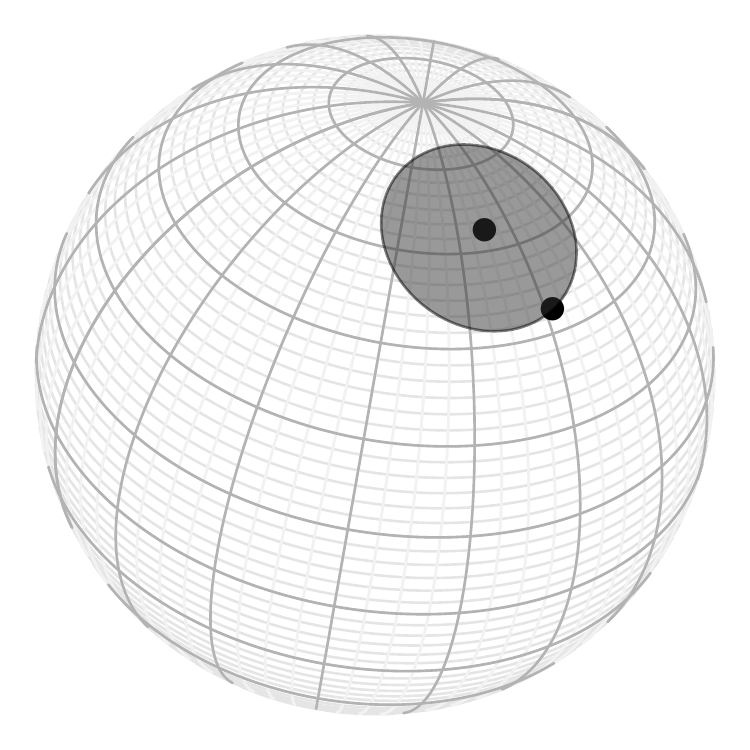}};
  \node [inner sep = 1pt] (Cxr) at (-0.66, 1.15) {$\mathbb{C}_{\tilde{x}r}$};
  \draw [->,very thin] (Cxr.east) to [out=350,in=110] (0.12, 0.65);
  \node [inner sep = 1pt] (S) at (-0.925,0.9) {$\mathbb{S}$};
  \draw [->,very thin] (S.east) to  [out=0,in=125] (125:1);
  \node [inner sep = 1pt] (xt) at (0.85, 0.975) {$\tilde{x}$};
  \draw [->,very thin] (xt.south west) to  [out=180,in=45] (0.35, 0.45);
  \node [inner sep = 1pt] (yt) at (1, 0.75) {$\tilde{y}$};
  \draw [->,very thin] (yt.south west) to  [out=240,in=30] (0.55, 0.2);
\end{tikzpicture}\hspace*{\fill}
  \caption{Standardized prediction $\tilde{x}$,
    standardized observation $\tilde{y}$, and
    spherical cap $\mathbb{C}_{\tilde{x}, r}$ (gray area)
    when $d = 1$ (left side) and when
    $d = 2$.}
  \label{fig:cap}
\end{figure}

Of course, in this section we have only derived a test equivalent to
the standard t-test for the linear model. However, the considerations
will be useful in the following sections.

\subsubsection{Single nonlinear dose-response model}
\label{sec:onemod}

\begin{figure}
\begin{tikzpicture}[yscale=2.8,xscale=1.2,
  thinline/.style={color=gray,very thin}]
  \def\theta{0.4}
  \draw (0,0) node [anchor=north east, color=black] {0};
  \draw [thinline] (0,0)--(3,0);
  \draw [thinline] (0,0)--(0,1) node [anchor=east, color=black] {1};
  \draw [thinline] (0,1)--(3,1);
  \foreach \k in {1, ..., 3}
      \draw [thinline] (\k,1) -- (\k,0) node [anchor=north, color=black] {\k};
    \foreach \k in {4,...,1}
        \draw [gray!50!black, smooth, thick]
               plot file {spheretube\k.table};
  \foreach \x/\y in {0/0, 1/0.2434817, 2/0.4401643, 3/1}
      \node [inner sep = 1.5pt,draw,circle,fill=white] at (\x, \y) {};
  \node at (2.8, 0.18) {\footnotesize $\gamma = 0.1$};
\end{tikzpicture}
 \hfill
 \begin{tikzpicture}[scale = 2,>=stealth]
   \pgftext {\includegraphics[scale=0.29]{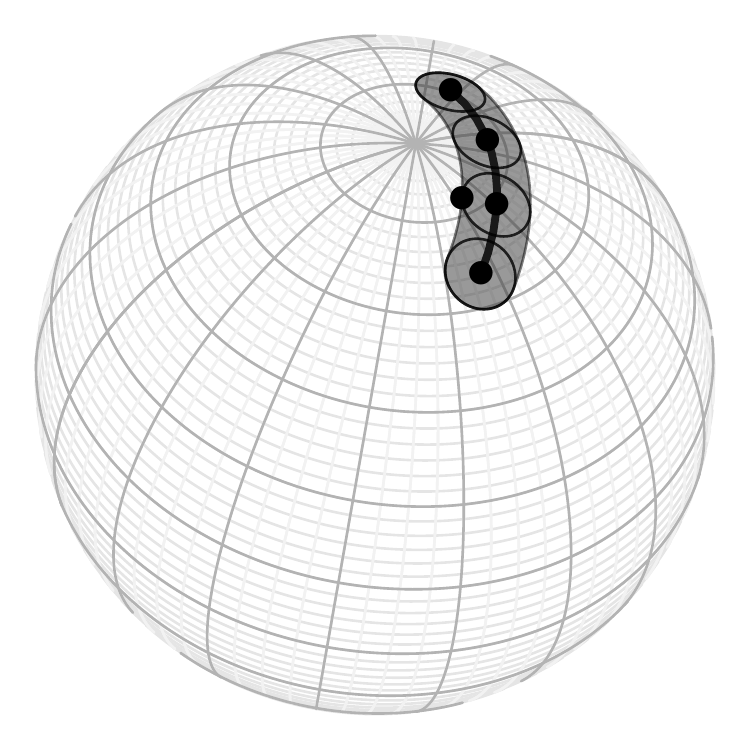}};
   \node [inner sep = 1pt] (tx1) at (40:1.5) {$\tilde{x}_{0.1}$};
   \draw [->,very thin] (tx1.west) to  [out=180,in=0] (0.26,0.84);
   \node [inner sep = 1pt] (tx2) at (25:1.5) {$\tilde{x}_{0.7}$};
   \draw [->,very thin] (tx2.west) to  [out=180,in=0] (0.366,0.69);
   \node [inner sep = 1pt] (tx3) at (10:1.5) {$\tilde{x}_{1.6}$};
   \draw [->,very thin] (tx3.west) to  [out=180,in=0] (0.395,0.5);
   \node [inner sep = 1pt] (tx4) at (-5:1.5) {$\tilde{x}_{10}$};
   \draw [->,very thin] (tx4.west) to  [out=180,in=0] (0.35,0.3);
   \node [inner sep = 1pt] (ty) at (175:1.5) {$\tilde{y}$};
   \draw [->,very thin] (ty.east) to  [out=0,in=180] (0.219,0.52);
 \end{tikzpicture}
 \hfill
 \includegraphics[scale=0.58]{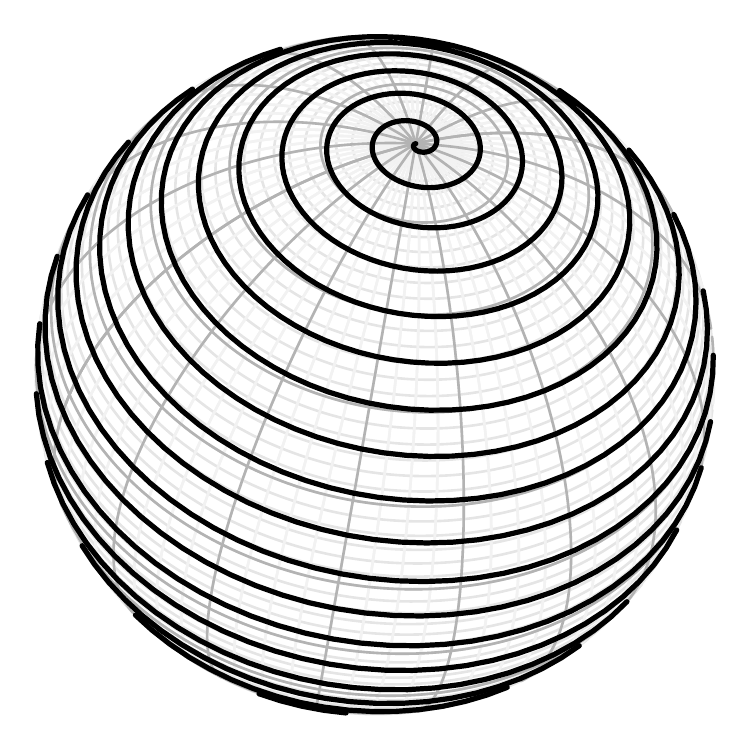}
 \hspace*{\fill}
 \caption{Left plot: Observed point $y$ (open circles) and the exponential curves $z \mapsto \exp(z / \gamma)$,
   with $z \in [0, 3]$ for the four values $\gamma = 0.1, 0.7, 1.6,
   10$ plotted in the zero-one standardization;
   the curve for $\gamma = 0.1$ is the lowest (most convex) curve,
   while the curve for $\gamma = 10$ is the topmost (almost linear) curve.
   Middle plot: tubular neighborhood $\mathbb{T}_r$ (gray area)
   around the curve $\mathbb{M} = \{\tilde{x}_\gamma : \gamma \in
   [0.1, 10]\}$ (black line) for the model
   $x_\gamma^{(k)} = \exp(z^{(k)} / \gamma)$
   with $z^{(k)} = k - 1$ for $k = 1, 2, 3$.
   Right plot: model curve for the
   counterexample at the end of Section~\ref{sec:chi2}.}
  \label{fig:tube}
\end{figure}

We now consider again the case of a single hypothesis $I = \{i\}$, but
generalize to the situation, where the parameter space $\Gamma_i$ no
longer consists of a single point. We
write $\Gamma = \Gamma_i$ and $x_{\gamma} = x_{\gamma, i}$.

By the considerations in the last section,
the LR statistic for a fixed value
$\gamma\in\Gamma$ equals $R_\gamma =
\tilde{x}_{\gamma}^{\top}\tilde{y}$, with
$\tilde{x}_{\gamma} = (B x_{\gamma}) /\norm{B x_{\gamma}}$.
Therefore, the LR statistic
of $H_0$ against $H_i$ equals $\inf_{\gamma\in\Gamma}{S(R_{\gamma})}
= S(R)$ with $R = \sup_{\gamma\in\Gamma} R_{\gamma}$
since $S$ is continuous and non-increasing in $R$.



Write again $r$ for the observed value of $R$,
define the model set
$\mathbb{M} = \{\tilde{x}_{\gamma} : \gamma \in \Gamma\}$
of the standardized prediction that are possible under $H_i$, and
define the tubular neighborhood
$\mathbb{T}_{r} = \mathbb{T}_{r}(\mathbb{M}) = \bigcup_{s \in \mathbb{M}}
\mathbb{C}_{s r}$ around $\mathbb{M}$.
Then it follows from the uniform distribution of
$\tilde{y}$ that
\[p = P_0(R > r) = P_0(\tilde{y} \in \mathbb{T}_{r}) =
\abs{\mathbb{T}_{r}} / \abs{\mathbb{S}}.\] The volume
$\abs{\mathbb{S}}$ of the unit sphere is straightforward to
calculate. \citet{hote:1939} gives explicit equations for
$\abs{\mathbb{T}_{r}}$ when $\mathbb{M}$ is a closed curve that
satisfies certain regularity conditions (essentially the nonlinear
transformation $x_\gamma$ needs to be sufficiently smooth) . These
results have been extended to more general manifolds $\mathbb{M}$;
see, for example, \citet{naiman:1990} and \citet{gray:2004}, and the
references therein.

The middle plot of Figure~\ref{fig:tube} illustrate this construction
for the exponential model $x_\gamma^{(k)} = \exp(z^{(k)} / \gamma)$
with $z^{(k)} = k - 1$ for $k = 1, 2, 3$.  Here and in the following,
we will write $x^{(k)}$ and $z^{(k)}$ for the $k$-th elements of the
vectors $x$ and $z$.  Assume that we observe a point $y = (-0.6,\,
-0.2,\, 0,\, 0.8)^{\top}$.  The correlation $R_{\gamma} =
\tilde{x}_{\gamma}^{\top} \tilde{y}$ is maximized when $\gamma = 1.7$.
Four points $\tilde{x}_{0.1},\, \tilde{x}_{0.7},\,
\tilde{x}_{1.6}\,\tilde{x}_{10}$, which are equally spaced on
$\mathbb{M}$, are shown together with their spherical caps; the
tubular neighborhood $\mathbb{T}_r$ is generated by moving the
spherical cap along the curve $\mathbb{M}$. Note that equal spacing of
points on $\mathbb{M}$ leads to unequally parameter values $\gamma$,
this is due to the nonlinearity of the model function.

Since $R_{\gamma}$ is invariant under affine transformations of
$x_{\gamma}$ and $y$, we may scale them to the unit interval
(``zero-one standardization''), as shown in the left-hand side of
Figure~\ref{fig:tube}.

\subsubsection{$\chi^2$ critical values }
\label{sec:chi2}

Before we continue to the case of multiple models in the next section, we
will consider what could happen if we would simply ignore the
identifiability issue and would assume that the statistic $-2 \log S$
is asymptotically $\chi^2$ distributed.

The critical value of such a test would depend only on the number of
parameters in the model, but not on the area of on sphere covered by
the model.  One can construct an example that shows that such a test
can not control the type-I error in general: in fact, for any critical
value $q \in \mathbb{R}_+$ of the statistic $-2 \log S$, there exists
a model so that the resulting test always rejects (has type-I error
1).  The idea is to choose the model in such way that the standardized
predictions form a curve that comes arbitrarily close to any point on
the unit sphere. In Appendix B we give the model
equation for a model function that fulfills these requirements, see
also the right-hand side of Figure~\ref{fig:tube}, which gives a
graphical illustration of this model.

\subsubsection{Multiple nonlinear dose-response models}
\label{sec:multmod}

The case of multiple models can be treated along the lines of a single
model, just by forming the union of the dose-response model shapes on
the unit sphere: the multiple models can be collapsed into a single
big model, so that the case of a candidate set of models is along the
lines of the single model.  From a multiple testing perspective a
max-test is being performed, \textit{i.e.} the maximum being over the
different candidate dose-response models.

In more detail, the LR statistic is a monotonous function of the
statistic
\[R = \max_{i \in I} \sup_{\gamma_i \in \Gamma_i}
    \tilde{x}_{\gamma_i, i}^{\top} \tilde{y},\qquad\text{with}\
    \tilde{x}_{\gamma_i, i} = (B x_{\gamma_i, i}) /
    \norm{B x_{\gamma_i, i}}.\]
Equivalently, we can write $R$ as $R = \sup_{\gamma \in \Gamma'} \tilde{x}_{\gamma}^{\top} \tilde{y}$
for a single (composite) model
\[
\Gamma' = \bigcup_{i \in I} \bigl\{\Gamma_i \times \{i\}\bigr\}, \qquad
\tilde{x}_{(\gamma_i, i)} = \tilde{x}_{\gamma_i, i}.
\]
If we define
$\mathbb{M} = \{\tilde{x}_{\gamma} : \gamma \in \Gamma'\}$
and $\mathbb{T}_{r}(\mathbb{M})$
for this composite model, then the task reduces to the calculation
of the volume $\abs{\mathbb{T}_{r}}$, since
\[
P_0(R > r) = \abs{\mathbb{T}_{r}} / \abs{\mathbb{S}}.
\]

Analytic methods developed for the case of a single model can no
longer be used as the different models might have intersecting curves
on the unit-sphere (see Figure~\ref{fig:mult} for an illustration),
which is why specialized numerical methods need to be developed to
calculate the associated tubular volumes. 

In addition to the overall p-value $p = P_0(R > r)$, the
multiplicity-adjusted p-values $p_i = P_0(R > r_i)$ for the individual
hypothesis $H_i$, for $i \in I$, are also of interest, where $r_i$ is
the correlation between the observations and the best predictions
under the hypothesis $H_i$.

This ``multiplicity'' penalty has a direct geometric interpretation:
if the candidate set of dose-response models gets larger, the
standardized predictions cover larger parts on the unit sphere;
consequently, the tube around the standardized model predictions has
to get thinner if the volume $\abs{\mathbb{T}_{r}}$ (and therefore the
type-I error) should be kept constant.

\begin{figure}
  \centering
  \begin{tikzpicture}[scale = 2,>=stealth]
     \pgftext {\includegraphics[scale=0.29]{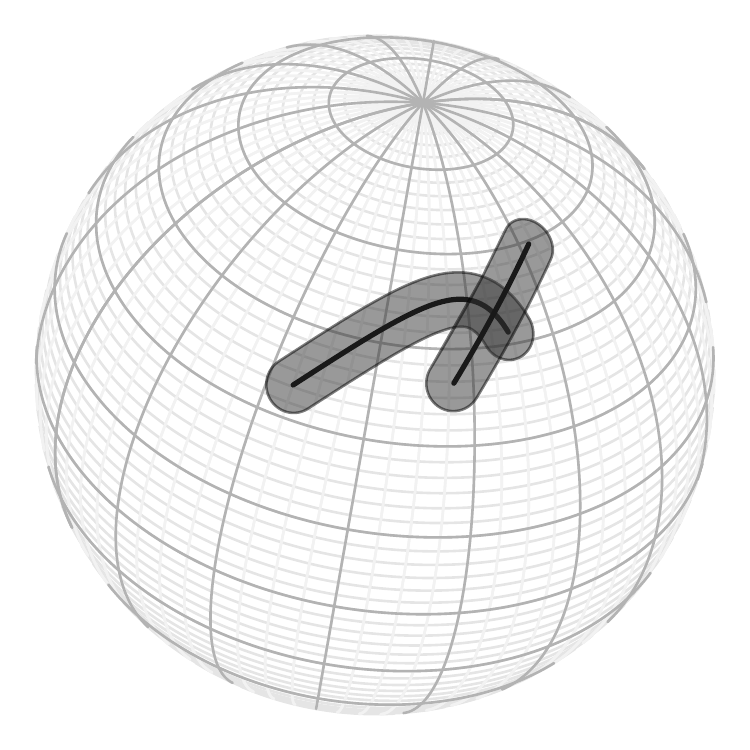}};
     \node [inner sep = 1pt] (M1) at (30:1.5) {$\mathbb{M}_{1}$};
     \draw [->,very thin] (M1.west) to  [out=180,in=45] (0.457,0.395);
     \node [inner sep = 1pt] (T1) at (15:1.5) {$\mathbb{T}_{r1}$};
     \draw [->,very thin] (T1.west) to  [out=180,in=0] (0.525,0.35);
     \node [inner sep = 1pt] (M2) at (160:1.5) {$\mathbb{M}_{2}$};
     \draw [->,very thin] (M2.east) to  [out=0,in=140] (0,0.14);
     \node [inner sep = 1pt] (T2) at (175:1.5) {$\mathbb{T}_{r2}$};
     \draw [->,very thin] (T2.east) to  [out=0,in=180] (-0.33,-0.028);
  \end{tikzpicture}
 \caption{Standardized predictions
   $\mathbb{M}_i = \{\tilde{x}_{i,\gamma_i} : \gamma_i \in \Gamma_i\}$
   and tubular neighborhoods $\mathbb{T}_{r i} = \mathbb{T}_r(\mathbb{M}_i)$
   for the models
   $x_{1,\gamma}^{(k)} = \cos(z^{(k)} + \gamma)$
   and
   $x_{2,\gamma}^{(k)} = (z^{(k)})^{\gamma} /
   (z^{(k)})^{\gamma} + 1.5)$, where
   $\Gamma_1 = [\pi, (5/4)\pi]$ and $\Gamma_2 = [10^{-3},5]$,
   when $z = (0, \dots, 3)^{\top}$.
   Due to the overlap between the two neighborhoods,
   the volume of the combined tube, $\abs{\mathbb{T}_{r1} \cup
   \mathbb{T}_{r2}}$, is smaller then the sum of the
   two volumes, $\abs{\mathbb{T}_{r1}} + \abs{\mathbb{T}_{r2}}$.}
  \label{fig:mult}
\end{figure}

\subsection{Numerical calculation of tubular volumes}
\label{sec:num}

This section describes an algorithm motivated by importance sampling
to approximate the probabilities $P_0(R > r) = P_0(\tilde{y} \in
\mathbb{T}_r)$, to obtain p-values or a critical value $r_{crit}$ for
the test statistic $R$. As discussed earlier, under the null
hypothesis, $\tilde{y}$ is uniformly distributed on the sphere, so
that $P_0(R > r)=\int_{\mathbb{T}_r} \mathrm{d}\varsigma$, with
$\varsigma$ the uniform probability measure on $\mathbb{S}$. In
addition to p-values and the critical value, we are also be interested
in calculating the power $P_1(R > r) = P_1(\tilde{y} \in
\mathbb{T}_r)$, for a critical value~$r$, with $P_1(\cdot)$ a
probability calculated under some alternative hypothesis.  By
arguments due to \citet{pukkila:1988}, under alternative hypotheses,
$\tilde{y}$ follows an angular Gaussian distribution on $\mathbb{S}$,
which means that $P_1(R > r)=\int_{\mathbb{T}_r} f\,\mathrm{d}
\varsigma$, where $f : \mathbb{S} \to \mathbb{R}$ is the density of
the angular Gaussian density (of course, under the null-hypothesis $f
\equiv 1$.)

Let us motivate the algorithm by considering a naive sampling
approach.  If we draw a sample $U_1, \dots, U_{\kappa}$
uniformly from $\mathbb{S}$, then $\int_{\mathbb{T}_r} f\,\mathrm{d}
\varsigma \approx \sum_{k = 1}^{\kappa} \ind{U_k \in \mathbb{T}_r}
f(U_k) / \kappa$.  However, since $\mathbb{T}_r$ is only implicitly
defined by the nonlinear models, deciding if a point $U_k$ belongs to
$\mathbb{T}_r$ requires the computationally expensive calculation of
the maximum-likelihood estimates for each of the nonlinear models.

A better approach is to restrict sampling to $\mathbb{T}_r$
instead of sampling from the whole sphere $\mathbb{S}$.
If one can construct a probability distribution
supported only on $\mathbb{T}_r$ with known density $g$
with respect to $\varsigma$,
and generate
a sample $V_1, \dots, V_{\kappa}$
according to this distribution, then we could use
importance sampling to obtain the approximation
\begin{equation}
  \label{eq:impsamp}
  \int_{\mathbb{T}_r} f\, \mathrm{d} \varsigma
  \approx \frac{1}{\kappa} \sum_{k = 1}^{\kappa} \frac{f(V_k)}{g(V_k)}.
\end{equation}
However, calculating such density $g$ is challenging.
The approach proposed in this paper is to sample points $V_1, \dots,
V_{\kappa}$ in such a way that even if we can not calculate $g$
exactly, we can approximate $g$ from the sample itself.  The main idea
for sampling only within $\mathbb{T}_r$ is to first sample a model $i$
from the $m$ models with equal probability, then sample a value
$\gamma$ from $\Gamma_i$ according to some convenient distribution
(for example, the uniform distribution if $\Gamma_i$ has finite
volume) and take $W = \tilde{x}_{\gamma, i}$. Then one samples a point
$V$ from $G_W$, the uniform probability distribution on the spherical
cap~$\mathbb{C}_{W r}$ around $W$ with radius $r$.  It is possible to
approximate the density $g$ underlying this proposal sampling
mechanism (see the Appendix C for details), which is needed to
calculate the importance ratios in (\ref{eq:impsamp}).

Note that in the strict sense however this is not an importance
sampling algorithm, as the proposal density $g$ as well as integral
are determined from the same sampled values. In Appendix C the
consistency of the sampling scheme is proved. That means that the
approximation error of the algorithm can be made arbitrarily small by
increasing the number of sampling replicates $\kappa$. The number of
simulations can be chosen by monitoring the Monte Carlo standard
error, to obtain the desired precision.

\subsection{Comparison to MCP-Mod}
\label{sec:comparison-mcp-mod}

The multiple contrast test of the MCP-Mod procedure
~\citep{bret:pinh:bran:2005} can be viewed as a discrete analogue of
the likelihood-ratio approach presented here, as
each model is restricted to a finite number of possible
shapes in MCP-Mod.

Even in the case of a finite number of possible shapes (when both
approaches are applicable), the likelihood-ratio test seems preferable
due to the following argument.  The t-test for testing $\beta = 0$ is
uniformly most powerful invariant with respect to affine
transformations \cite[Chapter 7]{lehm:roma:2008}.  While the
likelihood-ratio test is equivalent to the t-test, the corresponding
contrast test statistic is invariant but differs from the t-test in
the variance estimate; hence it is not in general admissible (although
in this setting power gains by an LR test are unlikely to be large).

\section{Applications}
\label{sec:appl}

In this section we will use data from a dose-response clinical trial
to illustrate the methodology.  One of the objectives of such trials
is to test for a dose-response effect, that is, whether the
dose-response is flat or not.  Motivated by the data, we define some
scenarios and compare the approach to alternative trend tests: Section
\ref{sec:sim} considers a single candidate model; Section
\ref{sec:power-calc-mult} considers a candidate set of models.

In all examples considered p-values and power values were calculated
using a Monte Carlo standard error of at most $0.001$, or when a maximum
sample of $10^6$ samples was reached in the algorithm. Calculation of the
critical value was done using root-findung under the null-hypothesis.

\subsection{Dose-Finding Example}
\label{sec:data}

The purpose of this section is to illustrate the methodology on a real
data set and compare it to the MCP-Mod methodology. The section also
illustrates how the critical value (and thus the multiplicity
adjustment) depends on the complexity of the candidate set of models.

The used data set is available in the \texttt{DoseFinding} R package
under the name \texttt{biom}.  The data comes from 100 patients
allocated equally to a placebo and four treatment arms with
dose levels 0.05,\, 0.2,\, 0.6,\, 1.

To observe the impact of the richness of the candidate set on the
critical value (\textit{i.e.} the resulting multiplicity adjustment),
we will sequentially increase the set of candidate models. We start
with the Emax model $\alpha + \beta \,z^{(k)} / (z^{(k)} + \gamma)$,
where the interval $\Gamma = [0.001, 1.5]$ was chosen for the
parameter $\gamma$. This interval covers a wide range of possible
shapes underlying the Emax model. This can be seen, when plotting and
overlaying the resulting (``zero-one'' standardized) response curves;
see Figure~\ref{fig:example}\,(a). The boundaries of the polygon
correspond to $\gamma$ equal to $0.001$ and $1.5$. When using a
significance level of 5\% one-sided, the resulting critical value is
$0.197$.  When raising the upper limit of $\gamma$ to $10$, although
this results in a much larger interval for $\gamma$, the critical
value only goes up to $0.199$. The reason is that the standardized
model predictions do not cover much additional area on the unit
sphere.  This can also be seen in Figure~\ref{fig:example}\,(a) in the
dark gray area: Due to the nonlinearity, the additional flexibility on
the parameter space (by increasing the upper bound to 10) does not
lead to a major increase in flexibility of the model shapes.

\begin{figure}[t!]
  \begin{center}
    \includegraphics[width=0.95\textwidth]{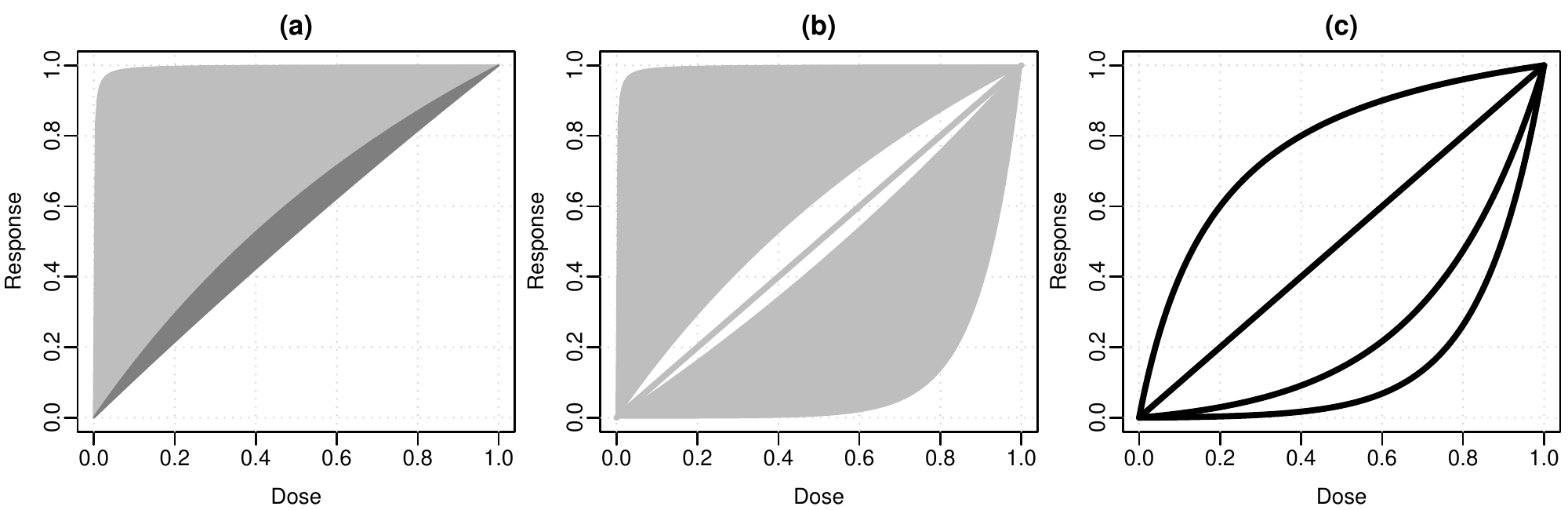}
  \end{center}
  \caption{Response curves in the candidate set. The curves were
    standardized to have effect 0 at dose-level 0 and effect 1 at
    dose-level 1 (the ``zero-one'' standardization from
    Figure~\ref{fig:tube}): (a) depicts Emax models for $[0.001, 1.5]$
    (gray area) and $[0.001, 10]$ (gray and dark gray area
    combined). (b) depicts curves corresponding to the used candidate
    models: Emax, linear and exponential (c) depicts the candidate
    curves used for the MCP-Mod methodology}
  \label{fig:example}
\end{figure}

Now suppose we would like to add a linear model $\alpha + \beta
z^{(k)}$ to the candidate set of models. The linear model does not use
a dose transformation and it represents only a single shape; see also
the linear increasing line in Figure~\ref{fig:example}\,(b).
Therefore, adding this model does not make the candidate set much
broader in terms of shapes, and the critical value only increases to
$0.200$.  When adding an exponential model of form
$\exp(z^{(k)}/\gamma)-1$ with parameter bounds $[0.1,2]$, this leads
to a more pronounced extension of the possible predictions.  This is
also reflected in the critical value, which goes up to $0.210$.
Figure~\ref{fig:example}\,(b) shows the possible shapes that will be
used for the LR test.

So there is a direct, intuitive connection between the critical value
and the complexity of the candidate model shapes. The more parts of
the unit sphere are covered the larger the multiplicity penalty.  On
the other hand, when an additional shape is added that is similar to
other shapes that are already in the candidate set, the critical value
does not increase noticeably, due to the overlap of tubes.

Fitting the dose-response models with the necessary parameter
constraints can be done using the \texttt{fitMod} function from the
\texttt{DoseFinding} R package. The fitted dose-response functions are
shown in Figure \ref{fig:figbiom}

\begin{figure}[t!]
  \begin{center}
    \includegraphics[width=0.95\textwidth]{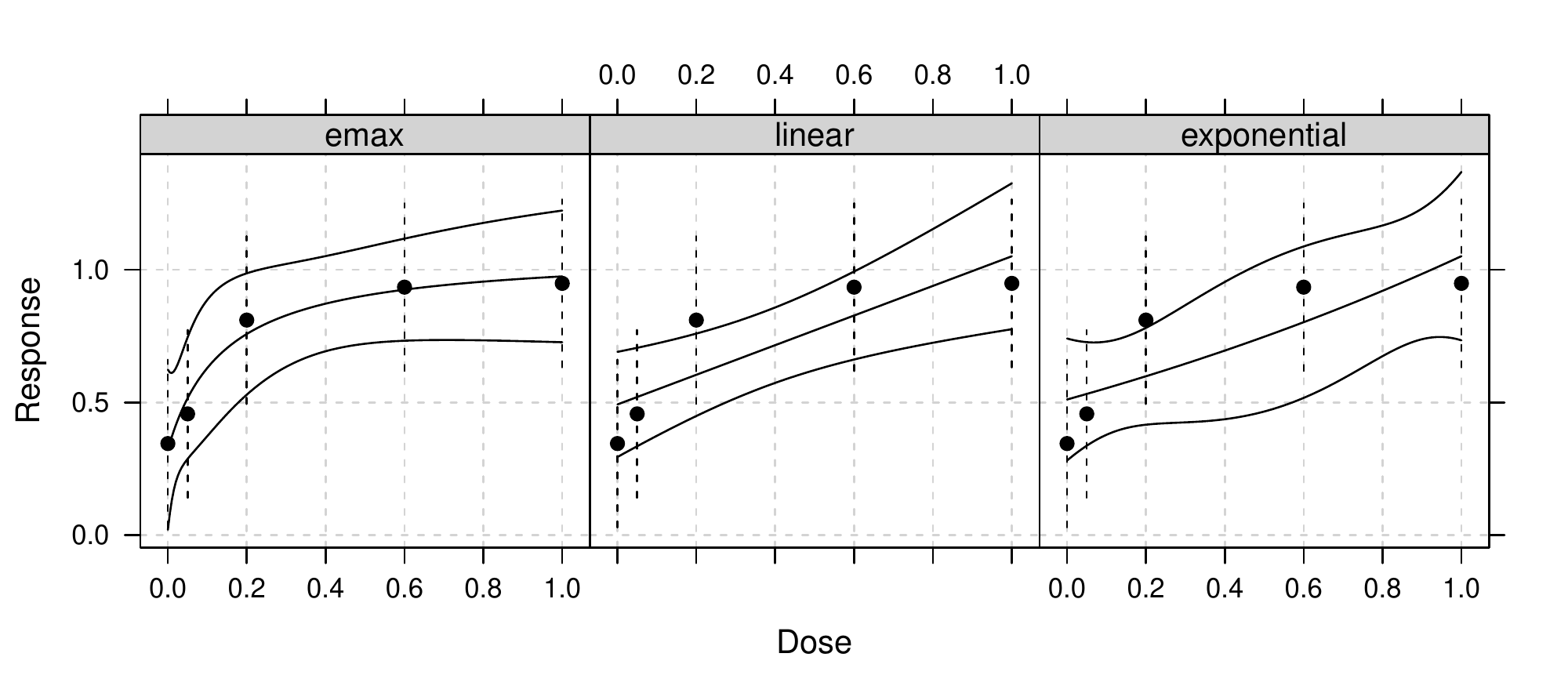}
  \end{center}
  \caption{Fitted dose-response functions for biom data-set with 95\%
    pointwise confidence intervals, the means at each dose with 95\%
    confidence intervals are also shown.}
  \label{fig:figbiom}
\end{figure}

The parameter estimates of the models can be found in
Table~\ref{tab:resultslr}, in addition to the p-values for testing
$\beta=0$ versus $\beta>0$, for each of the three models.  It can be
seen that the p-values for the models, considering all three models as
candidate set, are all smaller than $0.01$.  Of course, this is also
true for the p-values that result from considering each model class
separately.

\begin{table}
\centering
\caption{Results of likelihood-ratio test.
  The test statistic $R$ is the
  correlation between the observation and the best model prediction.
  p-value$^*$ is the p-value taking into account the fact that a
  candidate set of models was used (multiplicity adjustment for
  the multiple models), while p-value$^{**}$ is the p-value
  within the respective model class (no adjustment for the multiple models).}
\label{tab:resultslr}
\begin{tabular}{@{}llrrr@{}}
  \toprule
Model & Parameter estimates & test-statistic & p-value$^*$ & p-value$^{**}$ \\
  \midrule
Emax ($\gamma \in [0.001,1.5]$) & $\alpha=0.32,\, \beta=0.75,\, \gamma=0.14$ & 0.335 & 0.001 & 0.001 \\
  Linear & $\alpha=0.49,\, \beta=0.56$ & 0.287 & 0.006 & 0.002 \\
  Exponential ($\gamma \in [0.1,2]$) & $\alpha=0.51,\, \beta=0.83,\, \gamma=2.00$ & 0.276 & 0.009 & 0.004 \\
   \bottomrule
\end{tabular}
\end{table}

For comparison we will also apply the MCP-Mod procedure.  Similar to
\cite{bret:pinh:bran:2005}, we choose an Emax shape with $\gamma=0.2$,
a linear shape and two exponential shapes, one with
$\gamma=0.5/\log(6)$ and the other with $\gamma=0.15$. The candidate
shapes of the MCP-Mod procedure are presented in
Figure~\ref{fig:example}\,(c). The results shown in
Table~\ref{tab:resultsmm} have been calculated using the
\texttt{MCTtest} function in the \texttt{DoseFinding} package. One can
see that the Emax and linear model have p-values $<0.01$ similar to
the LR test. However, the two exponential shapes have p-values
$>0.025$, despite the fact that a trend could be detected using the LR
test for the exponential model. The reason is that neither of the two
$\gamma$ values fits the data well. If the exponential model for
$\gamma=2$ would have been included in the set of MCP-Mod candidates
also a p-value smaller than $0.01$ would be observed.

After establishing the existence of a dose-response effect, one can
continue by either selecting or averaging models to estimate the
dose-response curve and the target dose of interest; a detailed
discussion of these topics can be found in \cite{scho:born:bret:2015}.

\begin{table}
\centering
\caption{Results of MCP-Mod testing approach.}
\label{tab:resultsmm}
\begin{tabular}{@{}lrr@{}}
  \toprule
Shape & test-statistic & p-value \\
  \midrule
Emax ($\gamma=0.2$) & 3.464 & 0.001 \\
  Linear & 2.972 & 0.004 \\
  Exponential ($\gamma=0.15$) & 2.218 & 0.028 \\
  Exponential ($\gamma=0.5/ \log(6)$) & 1.898 & 0.056 \\
   \bottomrule
\end{tabular}
\end{table}





\subsection{Power calculations for a single Emax model}
\label{sec:sim}

In this section, we compare the power of the LR test to tests that are
optimal for specific values of $\gamma$: assume that the true model is
an Emax model with a parameter value $\gamma$; then, as discussed in
Section \ref{sec:comparison-mcp-mod}, the t-test of $\beta = 0$ versus
$\beta > 0$ in the linear model $y \sim \mathcal{N}(\alpha \bone{n} +
\beta x_{\gamma}, \sigma^2 I_n)$ using the true parameter $\gamma$, is
uniformly most powerful invariant. We will refer to this test as
locally optimal for $\gamma$.  These tests provide a useful upper
bound for the performance of the likelihood-ratio test.

Consider again the dose-response trial with dose levels $0,\, 0.05,\,
0.2,\, 0.6,\, 1$.  We take 20~observations per dose level and choose
the non-centrality parameter $\delta = \beta \norm{B x_{\gamma}} /
\sigma$ \citep[Section 6]{seber:2003} so that the locally optimal test
has power $80\%$ for a one-sided type I error of $5\%$.

For scenarios with different $\gamma$ values from $\Gamma = [0.001,
1.5]$, we look at the power of a LR test with $\gamma \in \Gamma$, and
the power of locally optimal tests for the four parameter values
$\gamma = 0.001,\, 0.035,\, 0.159,\, 1.5$. These parameter values are
chosen so that the corresponding points $x_{\gamma}$ are separated by
equal distances along the model curve $\mathbb{M} =
\{\tilde{x}_{\gamma} : \gamma \in \Gamma\}$. The critical value of the
LR test is given by $0.197$. In Figure~\ref{fig:emaxpower} it can be
seen that the LR test achieves a power above 70\% over the whole range
of $\Gamma$ and is rather close to the respective locally optimal
test. The power of the each locally optimal tests decreases markedly,
when the parameter $\gamma$ is mis-specified. For example the optimal
test for $\gamma=0.001$ has only around 50\% power when the true value
is $\gamma=0.24$.

\begin{figure}
  \centering
  \includegraphics[scale=0.7]{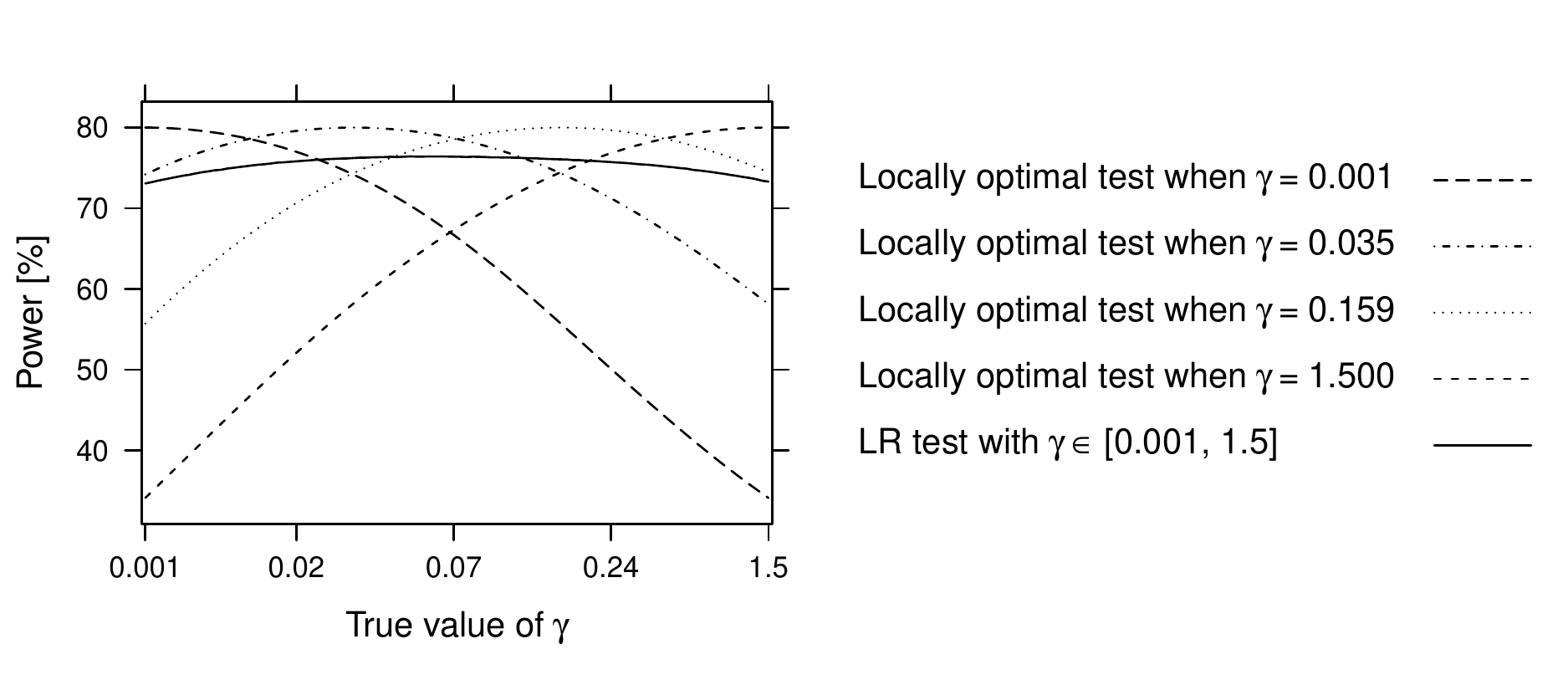}%
  \caption{
Power values of different test procedures
in dependence of the true parameter value $\gamma$
in the Emax model $x_{\gamma}^{(k)} = z^{(k)} / (z^{(k)} + \gamma)$
with 20 observations for each of the dose levels
$0,\, 0.05,\, 0.2,\, 0.6,\, 1$.
Power values of the locally optimal tests for
$\gamma$ values $0.001, \, 0.035,\, 0.159,\, 1.5$
are shown with dotted and dashed lines.
Power values of the likelihood-ratio test
with $\Gamma = [0.001, 1.5]$ are shown as solid line.
The axis for the true $\gamma$ values
is scaled so that equal distances on the axis
correspond to equal distances along the model curve
$\gamma \mapsto \tilde{x}_\gamma$ of standardized predictions.}
  \label{fig:emaxpower}
\end{figure}

\subsection{Power calculations for multiple models}
\label{sec:power-calc-mult}

We now define five scenarios for the true underlying mean vector: 1) a
linear model; 2) an Emax model with parameter $\gamma = 0.2$; 3) an
exponential model with parameter $\gamma = 0.1$; 4) an exponential
model with parameter $\gamma = 0.5 / \log(6)$; and
5) a sigmoid Emax model of the form $x = z^ 4/(z^4+0.05^4)$.

We choose the non-centrality parameter so that in each scenario, the
locally optimal test has a power of $50\%$ or $80\%$ for a one-sided
type I error of $5\%$. Note that these locally optimal tests are in
practice ``unachievable'', as they use information about the true
dose-response model class and the true parameter $\gamma$.

The LR test uses the following candidate models: the linear model, the
Emax model with $\Gamma = [0.001, 1.5]$, and the exponential model
with $\Gamma = [0.1, 2]$. The critical value of the LR test is given
by $0.210$.

For comparison, we consider three multiple contrast tests. First we
use Williams contrasts and Marcus contrasts \citep{will:1971,
  marc:1976}, as implemented in the \texttt{multcomp} R package. Both
are known to be powerful trend tests. In addition, we use a multiple
contrast test with four model-based contrasts (as in MCP-Mod), where
the contrasts are optimized to detect the true underlying simulation
scenarios 1-4 (see \cite{pinh:born:glim:2014} for details on how to
calculate these optimal contrasts). This is unrealistic, as in
practice the value of $\gamma$ for each model is unknown, but it gives
a useful benchmark.  Scenario 5) is included to investigate the
behaviour for a model shape that is neither part of the LR test nor
MCP-Mod set of candidate shapes.

\begin{table}
\caption{Power (in percent) of different test procedures
The five scenarios are: 1) a linear
model; 2) an Emax model with parameter $\gamma = 0.2$; 3) an
exponential model with parameter $\gamma = 0.1$; 4) an exponential
model with parameter $\gamma = 0.5 / \log(6)$; and 5) a
sigmoid Emax model of the form $x = z^ 4/(z^4+0.05^4)$.
The  non-centrality parameter is chosen so that a local optimal test
has either 50\% power (first 5 rows) or 80\% power (last 5 rows).}
\label{tab:resultsmod}
\centering\setlength{\tabcolsep}{8pt}
\begin{tabular}{@{}cl*{8}{c}@{}}
  \toprule
  && & \multicolumn{4}{c}{Locally optimal test for scenario}\\
  \cmidrule(l){4-7}
  Power & Scenario & LR  & 1   & 2   & 3   & 4   & MCP-Mod  & Williams & Marcus\\
  \midrule
50 & 1 (Linear) &43.3&50.0&44.1&39.4&45.1&46.8&34.9&43.0\\
   & 2 (Emax)   &43.4&44.1&50.0&25.3&32.1&44.1&41.1&43.8\\
   & 3 (Exp 1)  &39.4&39.4&25.3&50.0&48.8&44.1&28.0&37.7\\
   & 4 (Exp 2)  &41.6&45.1&32.1&48.8&50.0&46.1&30.6&40.2\\
   & 5 (Sigm)   &41.2&30.7&44.8&15.1&19.5&36.2&44.0&41.2\\\addlinespace
80 & 1 (Linear) &73.4&80.0&73.0&66.6&74.3&76.8&61.5&72.9\\
   & 2 (Emax)   &73.4&73.0&80.0&43.0&55.1&74.5&69.7&73.5\\
   & 3 (Exp 1)  &69.9&66.6&43.0&80.0&78.6&74.5&52.2&67.8\\
   & 4 (Exp 2)  &72.1&74.3&55.1&78.6&80.0&76.2&56.0&70.3\\
   & 5 (Sigm)   &71.1&52.9&73.9&23.2&31.9&65.0&73.6&71.0\\
\bottomrule
\end{tabular}
\end{table}

In Table \ref{tab:resultsmod} one can observe that the performance of
the locally optimal tests decreases for the scenarios they are not
optimized for, only the locally optimal test corresponding to the
linear shape gives a surprisingly good overall performance. Among the
multiple contrasts tests the MCP-Mod contrasts, which use information
about the true shapes, perform best, apart from the mis-specified
scenario 5, where both Williams and Marcus contrasts perform better.

The LR test gives a more robust overall performance. For scenarios 1-4
the power is slightly lower than for the MCP-Mod contrasts, despite
the fact that a richer candidate set of models is used. For scenario 5
one can however see that the LR test outperforms the MCP-Mod
contrasts. This is most is due to increased robustness (\textit{i.e.}
increased flexibility in model shapes) of the LR test compared to the
MCP-Mod contrasts, so that for this shape not included in the
candidate set a better performance is obtained.

\section{Conclusions}
\label{sec:concl}

In this manuscript, we have considered the problem of detecting a
dose-related trend based on a set of candidate dose-response models.

The problem is not identifiable asymptotically and the standard
asymptotic $\chi^2$ distribution of the likelihood-ratio statistic
does not apply. Furthermore one can show that using a critical value
that depends only on the number of parameters in the candidate models
and not the model complexity (\textit{i.e.} the parts covered on the
unit sphere) may lead to an arbitrarily large type-I error inflation.
To avoid this, we work with the exact small-sample distribution of the
likelihood-ratio statistic.  Based on a geometric interpretation of
the test statistic due to Hotelling, an sampling algorithm has been
developed to approximate the exact distribution of the test statistic.

This work extends the previously available methods for dose-response
testing in several respects.  The multiple contrast tests in
MCP-Mod use a fixed set of guesstimates of the nonlinear parameters
$\gamma$ in the testing step, but then in the modelling step estimates
those parameters from the data. We do not require that the parameters
in the nonlinear model part are fixed guesstimates, but allow to vary
in a defined interval.

An advantage over alternative approaches to derive the distribution of
the likelihood ratio test statistic under multiple models (such as
those in \cite{dett:tito:bret:2015} and \cite{baay:houg:2015}) is that
we work with the finite sample distribution instead of the asymptotic
distribution. In addition to rejection probabilities under the null
hypothesis, we also consider rejection probabilities under
alternatives. This allows to perform power calculations and thus
sample-size calculations at the design stage of an experiment, which
is of crucial importance in clinical trials.

The developed methods for calculation of the distribution of the LR
test could be of interest beyond dose-response analysis. Nonlinear
models, where a test of trend is of interest, appear in many areas of
applied sciences such as biology and economics (\textit{e.g.} change
point analysis or harmonic regression).

We have assumed that the residuals are normally distributed.  The
extension to elliptically contoured distributions is possible
directly~\citep{fang:1990, gupt:1993}; for other distributions,
components of the likelihood-ratio statistic may not be restricted to
the unit sphere anymore, but the general approach may still be applied
in some situations: for example, \citet{diac:1985} show that the
rejection regions of a test for independence in a two-way contingency
table are tubular neighborhoods on a simplex on which the test
statistic is uniformly distributed under the null hypothesis.

We have also assumed that under the correct model, residuals are
independent. The extension to the case of a known correlation
structure appears straightforward. More fundamental extensions, such
as allowing random effects in addition to the fixed effects, could
also be considered.

Another extension would be to test more complicated null
hypotheses. For example, a test for a hypothesis $H_{I}$ versus an
alternative $H_{J}$ where $I \subseteq J \subseteq \{1, \dots, 2 m\}$
allows to test individual parameters of a nonlinear model, such as the
Hill parameter of the sigmoid Emax model.

The rationale for developing a specialized numerical algorithm for
this problem is that direct simulation under the null hypothesis is
computationally difficult, since it involves multiple iterative
optimizations for each simulation replicate with a poorly identified
nonlinear parameter $\gamma$.  By using a form of importance sampling
it is not necessary to perform nonlinear optimization for the sampling
replicates. Note that the developed sampling algorithm might be of
interest in general for calculation of the volume of tubes.  At the
moment dimensionality that the algorithm works on grows with the
number of observations, an improvement of the algorithm would be to
work with sufficient statistics instead of raw data. This makes the
algorithm more efficient, but also more complicated. We also note here
that it is numerically advantageous to replace random samples by
quasi-random samples (as discussed for example in
\cite{fang:wang:1994}), in which case quasi importance sampling gives
a quadrature rule for functions on tubular neighborhoods.

This paper is primarily concerned with testing for a dose-response
effect, which is only one of the question of interest in dose-finding
studies. Once a dose-response effect has been established the
following questions are to estimate the dose-response curve and target
doses of interest, so a further topic to explore is the relation
between testing and estimation. For (frequentist) model averaging, the
predictions are most commonly weighted according to either the AIC or
the BIC of the models (see among others
\cite{scho:born:bret:2015}). Both of these criteria penalize models
only according to the number of parameters in the model.  We have seen
that the number of parameters is, at least for testing, a poor
surrogate for the model complexity (the flexibility of the predictions
that are possible under a model).  Maybe geometric considerations can
be used to penalize complex models in a more meaningful way.

\bibliographystyle{abbrvnat}

\newpage

\appendix

\large{\textbf{Appendix A:}} \normalsize
\textbf{Likelihood-ratio statistic for a single parameter value}
\label{sec:likel-ratio-stat}

\begin{theorem}
\label{thm:stat}
The LR statistic
for $H_0 : \beta = 0$ against $H_i : \beta \ge  0$ in the linear model
$y \sim \mathcal{N}(\alpha \bone{n}+\beta x, \sigma^2 I_n)$
has the form $S(R) = (1 - \ind{R > 0} R^2)^{n/2}$,
where $R$ is defined in Equation (6) in the paper.
\end{theorem}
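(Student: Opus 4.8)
The plan is to reduce the likelihood ratio to a ratio of residual sums of squares, transfer the computation to the centered subspace $\mathbb{L}$ via the matrix $B$, and then solve explicitly the one-dimensional nonnegatively constrained least-squares problem that remains. First I would write the Gaussian likelihood $L(\alpha,\beta,\sigma)=(2\pi\sigma^2)^{-n/2}\exp\bigl(-\norm{y-\alpha\bone{n}-\beta x}^2/(2\sigma^2)\bigr)$ and profile out $\sigma$: for a fixed mean vector the maximizing value of $\sigma^2$ is the residual sum of squares divided by $n$, so $\sup_\sigma L$ equals a constant times the residual sum of squares raised to the power $-n/2$. Writing $Q_0=\min_\alpha\norm{y-\alpha\bone{n}}^2$ for the null fit and $Q_1=\min_{\alpha,\,\beta\ge 0}\norm{y-\alpha\bone{n}-\beta x}^2$ for the fit under $H_i$, the likelihood ratio becomes $S=(Q_1/Q_0)^{n/2}$; since $Q_1\le Q_0$ this is at most $1$, consistent with rejecting $H_0$ for small $S$.

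Next I would pass to the subspace $\mathbb{L}=\{a:a^\top\bone{n}=0\}$. Because the rows of $B$ form an orthonormal basis of $\mathbb{L}$, the matrix $B^\top B$ is the orthogonal projection onto $\mathbb{L}$, and $\norm{Bv}^2=\norm{B^\top B v}^2$ for every $v$; moreover minimizing over $\alpha$ removes exactly the component along $\bone{n}\in\mathbb{L}^\perp$. This yields $Q_0=\norm{By}^2$ and $Q_1=\min_{\beta\ge 0}\norm{By-\beta Bx}^2$, reducing the problem to a scalar least-squares problem in $\beta$.

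The main obstacle is the inequality constraint $\beta\ge 0$, and this is precisely where the indicator $\ind{R>0}$ originates. The unconstrained minimizer is $\hat\beta=(Bx)^\top(By)/\norm{Bx}^2$, whose sign agrees with that of $R=\tilde x^\top\tilde y$. When $R>0$ the constraint is inactive, and the standard projection identity together with $R^2=\bigl((Bx)^\top(By)\bigr)^2/\bigl(\norm{Bx}^2\norm{By}^2\bigr)$ gives $Q_1=\norm{By}^2(1-R^2)$. When $R\le 0$ the minimum over $\beta\ge 0$ is attained at the boundary $\beta=0$, so $Q_1=\norm{By}^2=Q_0$. Combining the two cases gives $Q_1/Q_0=1-\ind{R>0}R^2$, and substituting into $S=(Q_1/Q_0)^{n/2}$ produces the claimed form. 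The remaining work is routine; the only care needed is the case split above and the bookkeeping that the sign of $\hat\beta$ matches the sign of $R$.
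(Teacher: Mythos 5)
Your proposal is correct and follows essentially the same route as the paper's own proof: both reduce the likelihood ratio to $(Q_1/Q_0)^{n/2}$, transfer the minimization to the centered coordinates via $B$ (the paper uses the equivalent centering matrix $C = I_n - n^{-1}\bone{n}\bone{n}^{\top}$), and handle the constraint $\beta \ge 0$ by the same case split on the sign of the unconstrained minimizer, which matches the sign of $R$. The one point the paper treats that you leave implicit is the degenerate case $y = \alpha\bone{n}$, where $Q_0 = 0$, the ratio $Q_1/Q_0$ is undefined, and profiling out $\sigma$ breaks down; the paper disposes of it separately (Case~1, giving $R = 0$ and $S = 1$), and your write-up should do the same, but otherwise the argument is complete.
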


\begin{proof}
Note first that
$R =
(B x)^{\top}(B y) / (\norm{B x} \norm{B y}) =
(C x)^{\top}(C y) / (\norm{C x} \norm{C y})$,
with
$C = I_n - n^{-1}\bone{n}\bone{n}^{\top}$
the centering matrix.

Let us first deal with a trivial case:
\paragraph*{Case 1:}\ If $y$ is constant (that is, if
$y = \alpha \bone{n}$ for some $\alpha \in \mathbb{R}$),
then $R = 0$ (since $C y = \mathbf{0}_n$) and $S = 1$ (since
the likelihood is maximized for $\beta = 0$ both
under $H_0$ and under $H_i$).

Assume from now on that $y$ is non-constant.
Then the LR statistic has the form \citep[p.~99]{seber:2003}
$S = (1 - T)^{n/2}$ with
\[
T = 1 - \frac{\min_{\alpha\in\mathbb{R}, \beta\in\mathbb{R}_{+}}
\norm{y - \alpha\bone{n} - \beta x}^2}
{\min_{\alpha\in\mathbb{R}} \norm{y - \alpha\bone{n}}^2}
=
1 - \frac{f(\hat{\beta}_{+})}{f(0)},
\]
where
$f(\beta) = \norm{C(y - \beta x)}^2$
and
$\hat{\beta}_{+} = \argmin_{\beta\in\mathbb{R}_{+}} f(\beta)$
(with $f(0) > 0$ since $y$ is non-constant).

Define
$\hat{\beta}_{\pm} = \argmin_{\beta\in\mathbb{R}} f(\beta)$.
From $\hat{\beta}_{\pm} = (\norm{C y} / \norm {C x}) R$,
\citet[p.~139]{seber:2003}, it follows that
\begin{equation}
  \label{eq:betapm}
  \hat{\beta}_{\pm} > 0\quad \Longleftrightarrow\quad R > 0.
\end{equation}

Let us distinguish two cases:
\paragraph*{Case 2:}\ If $R \le 0$,
then $\hat{\beta}_{\pm} \le 0$ and $\hat{\beta}_{+} = 0$
(since $f$ is quadratic in $\beta$), so that
\begin{equation}
  \label{eq:case1}
  T = 1 - f(0) / f(0) = 0.
\end{equation}
\paragraph*{Case 3:}\ If $R > 0$, then $\hat{\beta}_{\pm} > 0$
and $\hat{\beta}_+ = \hat{\beta}_{\pm}$, so that
\[
  T = 1 - f(\hat{\beta}_{\pm}) / f(0)
\]
From
\[
\norm{Cy - (\norm{C y} / \norm{C x}) R (C x)}^2 =
\norm{C y}^2 -
2\, \underbrace{(\norm{C y} / \norm{C x}) R (C x)^{\top}(C y)}_{\norm{Cy}^2 R^2}
+
\underbrace{(\norm{C y}^2 / \norm{C x}^2) R^2 \norm{C x}^2}_{\norm{Cy}^2 R^2}
\]
we get
$f(\hat{\beta}_{\pm}) = \norm{C y}^2 (1 - R^2)$ and
\begin{equation}
  \label{eq:case2b}
  T = 1 - \frac{\norm{C y}^2 (1 - R^2)}{\norm{Cy}^2} = R^2.
\end{equation}

Putting Case~2 and Case~3 together, we obtain $T = \ind{R > 0} R^2$,
and combining all three cases, we get
\begin{equation}
  \label{eq:SR}
  S(R) = (1 - \ind{R > 0} R^2)^{n/2}.
\end{equation}
\end{proof}

\vspace{0.7cm}
\large{\textbf{Appendix B:}} \normalsize
\textbf{Counterexample: $\chi^2$ critical values}
\label{sec:counter}

Consider, the model
\[
x_{\gamma} = B^{+} \tilde{x}_{\gamma}, \qquad \gamma \in \Gamma = [0, 2\pi],
\]
with $B^{+}$ the Moore-Penrose pseudoinverse of $B$, and
\[
\tilde{x}^{(k)}_{\gamma} = \cos \left(\ind{k > 1} \lambda^{k-2} \gamma\right)
\prod_{\ell = k}^{d} \sin \left(\lambda^{\ell - 1} \gamma\right),\quad k = 1, \dots, {d + 1},
\]
for some constant $\lambda \in \mathbb{N}$.
The standardized predictions, $\tilde{x}_{\gamma}$, are the
Cartesian coordinates of a point with spherical coordinates
$\gamma,\, \lambda \gamma,\, \dots, \lambda^{d-1} \gamma$.
Since
$\sup_{\psi \in [0, 2\pi]^d}\inf_{\gamma\in\Gamma}\norm{\psi -
(\gamma \bmod 2 \pi,\, \lambda \gamma \bmod 2 \pi,\, \dots, \lambda^{d - 1}
\gamma \bmod 2 \pi)^{\top}}
\to 0$, also
$\inf_{\tilde{y}\in\mathbb{S}}\sup_{\gamma\in\Gamma}\tilde{y}^{\top}
  \tilde{x}_{\gamma} \to 1$, as $\lambda \to \infty$.
Therefore, for any $q$, we can choose a constant $\lambda$
so that $R > (1 - \exp(-q / n))^{1/2}$ for all $\tilde{y}\in\mathbb{S}$,
which ensures that $-2 \log S > q$.

The plot in the right-hand side of Figure 2 in the paper
illustrates the construction when $d = 2$.
As $\gamma$ goes from 0 to $\pi$,
the spiral curve $\gamma \mapsto \tilde{x}_{\gamma}$
takes $\lambda$ rotations around the sphere.
For any $r < 1$, one can choose $\lambda$ large enough
to make the tubular neighborhood around $\mathbb{M}$ cover
the complete sphere.
Note that both of the models in Figure 2
have the same number of parameters but their complexities
(in terms of the possible predictions) are vastly different.

\vspace{0.7cm}
\large{\textbf{Appendix C:}} \normalsize
\textbf{Details on the numerical calculation of tubular volumes}
\label{app:num}


Assume that we have some way to sample a point $W$ with support
$\mathbb{M}$.
Let us say that this point has distribution $H$ on $\mathbb{M}$, even
if we do not know $H$ explicitly.  For the sake of illustration, here
is one simple way to sample such a point: start by selecting a model
$i$ from the $m$ models with equal probability, then sample a value
$\gamma$ from $\Gamma_i$ according to some convenient distribution and
finally take $W = \tilde{x}_{\gamma, i}$.

Next, we sample a point $V$ from $G_W$, the uniform probability
distribution on the spherical cap~$\mathbb{C}_{W r}$.  The joint
distribution is $P(V \in T, W \in M) = \int_M G_w(T) \mathrm{d} H(w)$,
for Borel sets $T \subseteq \mathbb{T}_{r}$ and $M \subseteq
\mathbb{M}$.  By Fubini's theorem, $\int_M G_w(T)\, \mathrm{d} H(w) =
\int_T \int_M g_w(v)\, \mathrm{d} H(w)\, \mathrm{d} \varsigma(v)$,
with $g_w(v) = \ind{w \in \mathbb{C}_{v r}} / \varsigma(\mathbb{C}_{v
  r})$ the density of $G_w$ with respect to $\varsigma$.  Let us
define $c_r = \varsigma(\mathbb{C}_{v r}) =
\abs{\mathbb{C}_{r}} / \abs{\mathbb{S}}$. The value of $c_r$
is given in Equation (7) in the main text. Then, the random
point~$V$ has density
\[g(v) =
\frac{1}{c_r}\int_{\mathbb{M}} \ind{w \in \mathbb{C}_{v r}}\, \mathrm{d}
H(w) = \frac{1}{c_r}P(W \in \mathbb{C}_{v r}).\]

If we repeat this procedure $\kappa$ times, we obtain a sample $(V_1,
W_1), \dots, (V_{\kappa}, W_{\kappa})$.  Approximating $P(W \in
\mathbb{C}_{v r})$ by relative frequencies gives $g(v) \approx
\abs{\{j : W_{j} \in \mathbb{C}_{v r}\}} / (\kappa c_r)$.  In this
way, we arrive at the approximation
\begin{equation}
\label{eq:approx}
\int_{\mathbb{T}_r} f\, \mathrm{d} \varsigma \approx
c_r
\sum_{k = 1}^{\kappa} \frac{f(V_k)}{\abs{\{j:W_j \in \mathbb{C}_{V_k r}\}}}.
\end{equation}

Note that we work with points on the $d$-dimensional unit sphere,
which becomes difficult when the sample size becomes large.  It is,
however, possible to modify the algorithm, and work with sufficient
statistics instead of raw data.  To keep the presentation brief, the
details are not given here. Algorithm~1 outlines the main steps of the
procedure.

\begin{algorithm}
  \caption{Sampling algorithm}
  \For {$k \leftarrow 1$ \KwTo $\kappa$}{
    sample $W_k$ from the distribution $H$ on $\mathbb{M}$\;
    sample $V_k$ from $G_{W_k}$ on $\mathbb{C}_{W_k,r}$\;
  }
  \For {$k \leftarrow 1$ \KwTo $\kappa$}{
  Calculate $m_k=\abs{\{j:W_j \in  \mathbb{C}_{V_k r}\}}$\;
  }
  \KwResult{$c_r \sum_{k = 1}^{\kappa} f(V_k) / m_k$}
\end{algorithm}

Theorem~\ref{thm:cons-sampl-scheme} below shows that the right-hand side of
Equation~\eqref{eq:approx} converges in probability to the left-hand
side as $\kappa \to \infty$.  Hence, the approximation error can be
made arbitrarily small by increasing the number of sampling replicates
$\kappa$.

\begin{definition}
In the following, we will call a family $Y_{n j}$, with
$j = 1, \dots, n$ and $n = 1, 2, \dots$, of
random vectors a \emph{triangular arrangement},
if for each $n$, the random vectors
$Y_{n1}, \dots, Y_{n n}$ are iid and have
finite expectations and variances.
\end{definition}

\newcommand{\intT}{\int_{\mathbb{T}_r} f\,\mathrm{d}\varsigma}

\begin{theorem}
\label{thm:cons-sampl-scheme}
Let $\mathbb{M}$ be a Borel set
on $\mathbb{S}$ and let
$\mathbb{T}_r$ denote the tubular neighborhood
around~$\mathbb{M}$ with radius $r$.
Consider a sequence
$w_n$ in $\mathbb{M}$
so that $\abs{\{j: 1\le j\le n\ \text{and}\ w_j \in \mathbb{C}_{v r}\}}
\to \infty$ as $n \to \infty$ for each $v$ in the interior of $\mathbb{T}_r$.
Also consider
a sequence of independent random vectors
$V_n \sim \mathcal{U}(\mathbb{C}_{w_n, r})$,
uniformly distributed on the spherical cap
$\mathbb{C}_{w_n, r}$.
Let $f : \mathbb{S} \to \mathbb{R}_+$
be Borel and bounded.
Define the sequence
of functions
$g_n : \mathbb{T}_{r n} \to \mathbb{R}_+$,
on $\mathbb{T}_{r n} = \cup_{j = 1}^n \mathbb{C}_{w_j, r}$,
by
$g_n(v) = (n\, \psi_n(v)\varsigma(\mathbb{C}_{v r}))^{-1}$
with
$\psi_n(v) = 1 / \abs{\{j: 1\le j\le n\ \text{and}\ w_j \in
  \mathbb{C}_{v r}\}}$ and
$\varsigma$ the uniform probability measure on $\mathbb{S}$.
Also define the random variables $X_n = f(V_n) / g_n(V_n)$ and
$S_n = (X_1 + \dots + X_n) / n$.
Then  $S_n \stackrel{P}{\to} \intT$.
\end{theorem}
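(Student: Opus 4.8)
My plan is to prove $S_n\stackrel{P}{\to}\intT$ by a Chebyshev (second-moment) argument that exploits the self-normalising form of $g_n$. Write $M=\sup_{\mathbb{S}}f<\infty$, let $c_r=\varsigma(\mathbb{C}_{v r})$ (a constant, since the cap volume does not depend on its centre), and let $N_n(v)$ denote the number of indices $j\le n$ with $w_j\in\mathbb{C}_{v r}$. Then $g_n(v)=N_n(v)/(n\,c_r)$, so each summand equals $X_k=f(V_k)/g_n(V_k)=n\,c_r\,f(V_k)/N_n(V_k)$ and $S_n=c_r\sum_{k=1}^n f(V_k)/N_n(V_k)$, which is exactly the estimator of \eqref{eq:approx}. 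The conceptual core is the importance-sampling identity: $g_n$ is precisely the empirical proposal density, and the symmetry $w\in\mathbb{C}_{v r}\iff v\in\mathbb{C}_{w r}$ yields the cancellation $\sum_{k=1}^n\ind{v\in\mathbb{C}_{w_k r}}=N_n(v)$, which is what makes the self-normalised average unbiased for $\int_{\mathbb{T}_{rn}}f\,\mathrm{d}\varsigma$.

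First I would compute the mean. Each $V_k\sim\mathcal{U}(\mathbb{C}_{w_k,r})$ has density $\tfrac{1}{c_r}\ind{v\in\mathbb{C}_{w_k r}}$ with respect to $\varsigma$, so $\mathrm{E}[c_r f(V_k)/N_n(V_k)]=\int_{\mathbb{C}_{w_k r}} f(v)/N_n(v)\,\mathrm{d}\varsigma(v)$. Summing over $k$ and interchanging sum and integral (justified since $f$ is bounded and $\varsigma$ finite), the cancellation above collapses the expression to $\mathrm{E}[S_n]=\int_{\mathbb{T}_{rn}} f\,\mathrm{d}\varsigma$. To pass to the limit I would use the coverage hypothesis: every interior point of $\mathbb{T}_r$ satisfies $N_n(v)\to\infty$, hence is covered for all large $n$, so $\mathbb{T}_{rn}=\bigcup_{j\le n}\mathbb{C}_{w_j,r}$ increases to $\mathbb{T}_r$ up to a $\varsigma$-null boundary set; dominated convergence (dominating constant $M$) then gives $\mathrm{E}[S_n]\to\intT$.

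Next I would control the variance. Because the $w_j$ are deterministic and the $V_k$ are independent, the summands $c_r f(V_k)/N_n(V_k)$ are independent, so $\var(S_n)=\sum_{k=1}^n\var\!\bigl(c_r f(V_k)/N_n(V_k)\bigr)\le c_r^2 M^2\sum_{k=1}^n\mathrm{E}[1/N_n(V_k)^2]$. Using the same change of density and the cancellation once more, $\sum_{k=1}^n\mathrm{E}[1/N_n(V_k)^2]=\tfrac{1}{c_r}\int_{\mathbb{T}_{rn}} 1/N_n(v)\,\mathrm{d}\varsigma(v)$, whence $\var(S_n)\le c_r M^2\int_{\mathbb{T}_{rn}} 1/N_n(v)\,\mathrm{d}\varsigma(v)$. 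Since $1/N_n(v)\le 1$ and $N_n(v)\to\infty$ on the interior, dominated convergence forces this integral (and hence $\var(S_n)$) to $0$. Combining with the mean computation, Chebyshev's inequality gives $S_n-\mathrm{E}[S_n]\stackrel{P}{\to}0$, and therefore $S_n\stackrel{P}{\to}\intT$; this is the point at which the triangular-arrangement weak law (row-wise averaging with finite variances) is invoked to legitimise the $n$-dependent normalisation.

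The main obstacle is the self-normalisation itself: the denominator $N_n(\cdot)$ is built from the very centres $w_j$ that generate the samples, so $S_n$ is not an ordinary Monte Carlo average against a fixed, known density, and its summands are genuinely entangled through $N_n$. The real content of the proof is therefore twofold — recognising the exact combinatorial cancellation $\sum_k\ind{\cdot}=N_n$ that makes the mean collapse to $\int_{\mathbb{T}_{rn}}f$, and showing that the self-normalised second moment is governed by $\int_{\mathbb{T}_{rn}} 1/N_n\,\mathrm{d}\varsigma$. The delicate part is that only the \emph{qualitative} coverage condition $N_n(v)\to\infty$ is available (with no rate), yet this already suffices, via dominated convergence on the finite measure $\varsigma$, for both the bias and the variance to vanish.
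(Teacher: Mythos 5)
Your proof is correct, and it reaches the paper's two analytic targets by a more direct route. Where you sum over the deterministic indices and use the symmetry $w\in\mathbb{C}_{v r}\iff v\in\mathbb{C}_{w r}$ to collapse $\sum_{k\le n}\ind{v\in\mathbb{C}_{w_k r}}=N_n(v)$, the paper instead randomizes the index: it introduces triangular arrangements $K_{n j}\sim\mathcal{U}(\{1,\dots,n\})$ and $V_{n j}'\given(K_{n j}=k)\sim\mathcal{U}(\mathbb{C}_{w_k, r})$, observes that the mixture point $V_{n 1}'$ has density exactly $g_n$ on $\mathbb{T}_{r n}$ (the same cancellation, packaged as a mixture density), computes $E(S_n')=\int_{\mathbb{T}_{r n}}f\,\mathrm{d}\varsigma$ and $\var(S_n')\le M\int_{\mathbb{T}_{r n}}\psi_n\,\mathrm{d}\varsigma$ just as you do, and then needs a separate transfer result (Lemma~\ref{lemma:mixture}, via $E(S_n)=E(S_n')$ and the conditional-variance inequality $\var(S_n)\le\var(S_n')$) to return to $S_n$. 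Your version buys self-containedness: for fixed $n$ the summands $c_r f(V_k)/N_n(V_k)$ are independent because the $w_j$ are deterministic, so Chebyshev applies directly and no auxiliary lemma is required; the limit arguments (monotone convergence for the bias, dominated convergence for $\int_{\mathbb{T}_{r n}}N_n^{-1}\,\mathrm{d}\varsigma\to0$, using that $\mathbb{T}_{r n}\subseteq\mathbb{T}_r$ and the caps are open so boundary effects are vacuous) coincide with the paper's.

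One discrepancy you should flag explicitly: you read every summand of $S_n$ as normalized by the \emph{common} density $g_n$, i.e.\ $S_n=c_r\sum_{k\le n}f(V_k)/N_n(V_k)$, which is indeed the estimator in \eqref{eq:approx} and Algorithm~1. The theorem's literal wording $X_n=f(V_n)/g_n(V_n)$ defines a fixed sequence with $X_k=f(V_k)/g_k(V_k)$, so the $k$th summand is normalized by $g_k$, and Lemma~\ref{lemma:mixture} is precisely the device for passing from the common-$g_n$ (mixture) average to that sequential average. Your direct computation proves the mixture version, not the sequential one. This is a defensible reading --- the paper's own proof also puts $g_n$ inside $X_{n j}'$, so the lemma's hypothesis $X_{n j}'\given(K_{n j}=k)\stackrel{d}{=}X_k$ is only satisfied under your interpretation --- but if the sequential statement is intended, your argument is missing the transfer step that the lemma supplies.
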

\begin{proof}
  Consider the triangular arrangements
  $K_{n j}$ and $V_{n j}'$ and $X_{n j}'$,
  where $K_{n j} \sim \mathcal{U}(\{1, \dots, n\})$,
  where $V_{n j}' \given (K_{n j} = k) \sim
  \mathcal{U}(\mathbb{C}_{w_k, r})$,
  and where $X_{n j}' = f(V_{n j}') / g_n(V_{n j}')$.
  Define $S_n' = (X_{n 1}' + \dots + X_{n n}') / n$.
  Due to Lemma~\ref{lemma:mixture} below, it suffices to show that
  $E(S_{n}') \to \intT$ and $\var(S_n') \to 0$.

  First, the argument for $E(S_{n}') \to \intT$.
  For each $n$, the variables $X_{n1}', \dots X_{n n}'$ are
  iid; hence
  $E(S_n') = E(X_{n1}')$. Since $V_{n1}'$ has
  density $g_n$ on $\mathbb{T}_{r n}$ (with respect to $\varsigma$),
  \[
  E(X_{n 1}') = E\left(\frac{f(V_{n 1}')}{g_n(V_{n1}')}\right) =
  \int_{\mathbb{T}_{r n}} \frac{f(v)} {g_n(v)} g_n(v) \mathrm{d} \varsigma(v)
  = \int_{\mathbb{T}_{r n}} f\,\mathrm{d}\varsigma.
  \]
  Now $\mathbb{T}_{r,1} \subseteq
  \mathbb{T}_{r,2} \subseteq \dots$ is an increasing sequence
  of subsets of $\mathbb{T}_r$,
  and $\abs{\mathbb{T}_r \setminus \mathbb{T}_{r n}} \to 0$
  by the assumptions on the sequence $w_n$;
  therefore, by the monotone convergence theorem,
  $E(S_{n}') = \int_{\mathbb{T}_r} \ind{v \in \mathbb{T}_{r n}}
  f(v) \,\mathrm{d} \varsigma(v) \to \intT$.

  Second, the argument for $\var(S_n') \to 0$.
  Integrating with respect to the density $g_n$ of $V_{n 1}'$ gives
  \[
  \var(X_{n 1}') = E\bigl((X_{n 1}')^2\bigr) - \bigl(E(X_{n 1}')\bigr)^2
  \le \int_{\mathbb{T}_{r n}} \frac{f(v)^2}{g_n(v)}\, \mathrm{d} \varsigma(v).
  \]
  Let $M = \sup_{v \in \mathbb{T}_{r n}} f(v)^2 \varsigma(\mathbb{C}_{v r})$,
  which is finite since $f$ is bounded.
  Then $\var(X_{n 1}') \le n M \int_{\mathbb{T}_{r n}} \psi_n\,\mathrm{d}\varsigma$.
  Furthermore, $\var(S_n') = \var(X_{n1}') / n$
  since $X_{n1}', \dots X_{n n}'$ are iid;
  thus $\var(S_{n}') \le M \int_{\mathbb{T}_{r n}} \psi_n\,\mathrm{d}\varsigma$.
  But $\psi_n(v) \to 0$ for every $v$
  in the interior of $\mathbb{T}_r$ and the
  boundary of $\mathbb{T}_r$ has measure zero,
  so that $\var(S_n') \to 0$ by the dominated convergence theorem.
\end{proof}

\begin{lemma}
\label{lemma:mixture}
  Consider a sequence of independent random
  variables $X_n$ and
  two triangular arrangements
  $K_{n j}$ and $X_{n j}'$,
  where $K_{n j} \sim \mathcal{U}(\{1, \dots, n\})$ and
  $X_{n j}' \given (K_{n j} = k) \stackrel{d}{=} X_k$.
  Define $S_n = (X_1 + \dots + X_n) /n$ and $S_n' = (X_{n_1}' + \dots +
  X_{n n}') / n$.
  If $E(S_n') \to \alpha$
  for some value $\alpha \in \mathbb{R}$
  and $\var(S_n') \to 0$, then
  $S_n \stackrel{P}{\to} \alpha$.
\end{lemma}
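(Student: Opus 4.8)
The plan is to deduce convergence in probability of the (non-identically-distributed) average $S_n$ from control of its first two moments, obtaining that control by comparison with the iid mixture average $S_n'$. The two facts I would establish are that the means of $S_n$ and $S_n'$ coincide exactly, while $\var(S_n)$ is never larger than $\var(S_n')$; Chebyshev's inequality then finishes the argument. Note first that finiteness of all relevant moments is automatic: since $X_{n1}'$ belongs to a triangular arrangement it has finite variance, and $E((X_{n1}')^2) = n^{-1}\sum_{k=1}^n E(X_k^2)$ forces each $X_k$ to have a finite second moment.

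First I would match the means. Writing $m_k = E(X_k)$, conditioning on the mixing index gives $E(X_{nj}') = E\bigl(E(X_{nj}' \given K_{nj})\bigr) = n^{-1}\sum_{k=1}^n m_k$, so $E(S_n') = E(X_{n1}') = n^{-1}\sum_{k=1}^n m_k = E(S_n)$. Hence the hypothesis $E(S_n') \to \alpha$ gives $E(S_n) \to \alpha$ directly.

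Second, and this is the crux, I would compare variances. Independence of $X_1, \dots, X_n$ gives $\var(S_n) = n^{-2}\sum_{k=1}^n \var(X_k)$, while the iid property of $X_{n1}', \dots, X_{nn}'$ gives $\var(S_n') = n^{-1}\var(X_{n1}')$. Applying the law of total variance to the mixing variable $K_{n1}$ yields
\[
\var(X_{n1}') = \frac{1}{n}\sum_{k=1}^n \var(X_k)
  + \Bigl( \frac{1}{n}\sum_{k=1}^n m_k^2 - \bigl( \frac{1}{n}\sum_{k=1}^n m_k \bigr)^2 \Bigr),
\]
where the bracketed term is the (nonnegative) empirical variance of the component means $m_1, \dots, m_n$. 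Multiplying by $n^{-1}$ shows $\var(S_n) = n^{-2}\sum_k \var(X_k) \le n^{-1}\var(X_{n1}') = \var(S_n')$. The conceptual content of this step---really the only nontrivial point---is that randomizing the component index inflates the per-summand variance by precisely the spread of the component means, so the mixture average carries at least as much variance as the genuine independent average.

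Finally I would assemble the pieces: from $0 \le \var(S_n) \le \var(S_n') \to 0$ we get $\var(S_n) \to 0$, and together with $E(S_n) \to \alpha$, Chebyshev's inequality gives $P(\abs{S_n - E(S_n)} > \eps) \le \var(S_n)/\eps^2 \to 0$ for each $\eps > 0$, so that $S_n \stackrel{P}{\to} \alpha$. I expect the only place demanding care to be the variance comparison---specifically, recognizing the extra between-component term in the total-variance decomposition and its sign---after which everything is routine.
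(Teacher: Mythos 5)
Your proof is correct and takes essentially the same route as the paper's: matching $E(S_n)=E(S_n')$ by conditioning on the mixing index, establishing $\var(S_n)\le\var(S_n')$ via the law of total variance applied to $K_{n1}$, and concluding with Chebyshev's inequality. The only cosmetic difference is that you compute the nonnegative between-component term $\var\bigl(E(X_{n1}'\given K_{n1})\bigr)$ explicitly as the empirical variance of the component means, where the paper simply adds this term and discards it by its sign.
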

\begin{proof}
  Fix a value $\epsilon > 0$. For every $n \in \mathbb{N}$,
  \[E(S_n) = \frac{1}{n} \sum_{j = 1}^{n} E(X_j) = E(E(X_{n1}'\given
  K_{n 1})) = E(X_{n1}') = E(S_n').
  \]
  Hence there exists an $m \in \mathbb{N}$ so that
  $\abs{E(S_n) - \alpha} < \epsilon / 2$, for all $n > m$.

  It follows that for such $n > m$ also
  $P(\abs{S_n - \alpha} > \epsilon) \le P(\abs{S_n - E(S_n)} > \epsilon / 2)$.
  By Chebyshev's inequality
  $P(\abs{S_n - E(S_n)} > \epsilon / 2)
  \le 4 \var(S_n) / \epsilon^2$. Consequently,
  if $\var(S_n) \to 0$, then
  $P(\abs{S_n - \alpha} > \epsilon) \to 0$.
  And $\var(S_n) \to 0$ follows from
  \begin{equation*}
    \begin{split}
      \var(S_n) & = \frac{1}{n^2}\sum_{k =1}^n \var(X_k) \\
      & = \frac{1}{n} E(\var(X_{n1}'\given K_{n1})) \\
      & \le \frac{1}{n} \Bigl(E(\var(X_{n1}'\given K_{n1}))
        + \var(E(X_{n j}'\given K_{n1}))\Bigr) \\
      & = \frac{1}{n} \var(X_{n1}') = \var(S_n'),
    \end{split}
  \end{equation*}
where the last line uses the well known equation for the conditional
variance (see, e.g., Proposition 5.2 in \cite{ross:2011}).
\end{proof}

\end{document}